\newtheorem{theorem}{Theorem}
\newtheorem{lemma}[theorem]{Lemma}
\newtheorem{remark}[theorem]{Remark}
\newtheorem{example}[theorem]{Example}
\newtheorem{corollary}[theorem]{Corollary}
\newtheorem{definition}[theorem]{Definition}
\newtheorem{construction}[theorem]{Construction}
\newtheorem{comparison}[theorem]{Comparison}
\Crefname{equation}{Eq.}{Eqs.}
\Crefname{figure}{Fig.}{Figs.}
\Crefname{construction}{Constr.}{Constrs.}
\Crefname{example}{Ex.}{Exs.}
\newcommand{\6}{\mathbf}
\newcommand{\wtH}{\textnormal{wt}_\textnormal{H}} 
\newcommand{\eset}{\mathcal{E}}
\newcommand{\deset}{\Delta(\mathcal{E})}
\newcommand{\Heset}{\textnormal{H}\text{-}\eset} 
\newcommand{\code}{\mathcal{C}}
\newcommand{\N}{\mathbb{N}}
\newcommand{\Z}{\mathbb{Z}}
\newcommand{\F}{\mathbb{F}}
\newcommand{\Fam}{\mathcal{F}}
\newcommand{\V}{\mathcal{V}}
\newcommand{\E}{\mathcal{E}}
\newcommand{\HC}{\textnormal{HC}}
\newcommand{\PBC}{\textup{PBC}}
\newcommand{\HPBC}{\textnormal{H}\text{-}\PBC}
\newcommand{\col}{\textnormal{col}}
\newcommand{\codeA}{\mathcal{A}}
\newcommand{\codeB}{\mathcal{B}}
\newcommand{\codeC}{\mathcal{C}}
\newcommand{\T}{T}
\newcommand{\W}{W}
\newcommand{\wtt}{t}
\newcommand{\wtw}{w}
\newcommand{\eseta}{\mathcal{E}_1}
\newcommand{\esetb}{\mathcal{E}_2}
\newcommand{\wtta}{t_1}
\newcommand{\wttb}{t_2}
\newcommand{\esetaa}{\mathcal{E}_{11}}
\newcommand{\esetbb}{\mathcal{E}_{22}}
\newcommand{\esetba}{\mathcal{E}_{21}}
\newcommand{\esetab}{\mathcal{E}_{12}}
\newcommand{\ca}{c_1}
\newcommand{\cb}{c_2}
\newcommand{\caa}{c_{11}}
\newcommand{\cbb}{c_{22}}
\newcommand{\cba}{c_{21}}
\newcommand{\cab}{c_{12}}
\newcommand{\HBall}[3]{B_{#1}(#2,#3)} 
\newcommand{\RGV}{R_\mathrm{GV}}
\newcommand{\RH}{R_\mathrm{H}}
\newcommand{\RThree}{R_\mathrm{3lvl}}
\newcommand{\RTwo}{R_\mathrm{2lvl}}
\title{Bounds and Codes for General Phased Burst Errors}
\author{%
  \IEEEauthorblockN{\textbf{Sebastian Bitzer}$^1$, \textbf{Andrea Di Giusto}$^2$, \textbf{Alberto Ravagnani}$^2$,  and \textbf{Eitan Yaakobi}$^3$}

  \IEEEauthorblockA{$^1$Technical University of Munich, Germany \\
                      $^2$Eindhoven University of Technology, the Netherlands \\   
                      $^3$Technion -- Israel Institute of Technology, Israel\\
                    \textit{sebastian.bitzer@tum.de}, \textit{\{a.di.giusto, a.ravagnani\}@tue.nl}, \textit{yaakobi@cs.technion.ac.il}
                    }
}
\newtheorem{question}{Question}
\newtheorem{property}{Property}
\begin{document}

\maketitle

\begin{abstract}
THIS PAPER IS ELIGIBLE FOR THE
STUDENT PAPER AWARD. 
Phased Burst Errors (PBEs) are bursts of errors occurring at one or more known locations.
The correction of PBEs is a classical topic in coding theory, with prominent applications such as the design of array codes for memory systems or distributed storage.
We propose a general yet fine-grained approach to this problem, accounting not only for the number of bursts but also the error structure in each burst.
By modeling PBEs as an error set in an adversarial channel, we investigate bounds on the maximal size of codes that can correct them.
The PBE-correction capability of generalized concatenated codes is analyzed, and asymptotically good PBE-correcting codes are constructed, recovering a classical construction in a specific problem instance.
\end{abstract}

\section{Introduction}\label{sect:intro}
A classic topic in coding theory is the correction of bursts of errors with a known start location and maximal duration, which arise in various contexts, such as non-volatile memory systems \cite{dolecek2017channel}.
In mathematical terms, given an array $\6X\in\F_q^{n\times m}$, the goal is to correct errors restricted to an unknown subset of the columns of $\6X$.
Such error patterns, known as Phased Burst Errors (PBEs), were first studied in \cite{goodman1993phased} (single PBE) and \cite{blaum1993new} (multiple PBEs), and inspired major applications \cite{blaum1995evenodd}.

\textbf{This work.} %
We consider PBEs as an error set (PBE set) in the context of adversarial channels \cite{ravagnani2018adversarial}.
Let $\eseta\subseteq\esetb$ be arbitrary subsets of $\F_q^n$, and let $0\leq w\leq m$.
We say an array $\6X\in\F_q^{n\times m}$ is an $(\eseta,\esetb,w)$-PBE if it has at most $w$ columns in $\esetb$, and the remaining columns belong $\eseta$.
A code correcting all such patterns is an $(\eseta,\esetb,w)$-PBE correcting code, or $(\eseta,\esetb,w)$-PBECC for short; we address the following 
\begin{question}\label{quest:codesize}
    Given $n$, $m$, $\eseta\subseteq\esetb$, $w$, what is the maximum size $A_q(n,m,\eseta,\esetb,w)$ of an $(\eseta,\esetb,w)$-PBECC in $\F_q^{n\times m}$?
\end{question}
More precisely, we examine the asymptotic behavior for~$m$ growing linearly in~$n$ and $|\eseta|, |\esetb|$ scaling exponentially with $n$.
Analyzing these error bursts as an arbitrary error set \cite{loeliger1994basic}, we derive upper and lower bounds on the size of a maximal code correcting them.
As an application of our results, we will consider the case where $\eseta=\{\60\}$ and $\esetb$ is a Hamming ball centered around zero in $\F_q^n$, see \Cref{fig:channel_model}.
In this framework, we compare our new bounds with the classical Hamming and Gilbert-Varshamov (GV) bounds, demonstrating that the structure of the PBEs enables an asymptotic improvement in code rate for the same overall error weight.

The structure of generalized concatenated codes aligns naturally with the structure of phased burst errors, making them a suitable candidate for code constructions.
We provide general guarantees on the PBE-correction capability, which gives rise to explicit constructions.
Our analysis demonstrates that the constructed codes are asymptotically good, offering positive rates for all channel parameters with positive GV bound.
For Hamming-metric bursts, we recover the classical construction of \cite{wolf1965codes} and improve it in particular cases.

\textbf{Related work.} %
Although the interest in PBEs traces back to \cite{blaum1993new,goodman1993phased,roth1998reduced} (and \cite{blaum1995evenodd} as a prominent application), many recent works have revisited this topic,  sometimes under different terminologies.
Most of the literature concerns PBEs in the Hamming metric: in \cite{gabrys2012graded,roth2014coding}, phased bursts are considered, where the error columns can have either high or low Hamming weight.
More recently, LDPC codes for correcting phased bursts of erasures were studied in \cite{xiao2019quasi,li2021balanced}.
The performance of a PBEC code under random bursts of errors is investigated in \cite{raphaeli2005burst}.
For cryptographic applications of PBEs and generic error sets, see \cite{da1993secret,manganiello2023generic}.

\textbf{Structure.} %
\Cref{sec:prelim} introduces the necessary terminology on adversarial channels and arbitrary error sets and formally states the studied problem. In \Cref{sec:bounds}, we derive bounds on the one-shot capacity of the channels introduced in \Cref{sec:prelim}, using the approach of \cite{loeliger1994basic} to the study of arbitrary error sets. 
In \Cref{sec:constr}, we investigate the PBE-correction capability of generalized concatenated codes.
Based on this analysis, general PBEC codes are constructed, and their performance is compared with the derived bounds.

\begin{figure}[t]
    \centering
    \scalebox{0.95}{







\begin{tikzpicture}
\begin{scope}
\node [rectangle, fill = CornflowerBlue!40!white,minimum width = 0.5cm, minimum height = 2.5cm, fill opacity = 0.5] at (2.75,-1.25) (err1){};

\node [rectangle, fill = CornflowerBlue!40!white,minimum width = 0.5cm, minimum height = 2.5cm, fill opacity = 0.5] at (0.75,-1.25) (err2){};

\draw[] (0,0) grid[xstep=0.5, ystep=0.5] (3,-2.5);

\node[draw=black, thick, name = cw, rectangle, minimum width = 3cm, minimum height = 2.5cm] at (1.5,-1.25) {};

\node[anchor = south, align = center] at (cw.north) {$m$ blocks\\ $\leq w$ erroneous};
\node[anchor = south, rotate=90, align = center] at (cw.west) {$n$ symbols\\ $\leq t$ errors};

\node[] at (0.75,-0.25) {$1$};
\node[] at (0.75,-1.25) {$1$};
\node[] at (2.75,-2.25) {$1$};
\node[] at (2.75,-1.75) {$1$};
\end{scope}

\begin{scope}[shift={(3.5,0)}]

\node [rectangle, fill = CornflowerBlue!40!white,minimum width = 0.5cm, minimum height = 2.5cm, fill opacity = 0.5] at (2.25,-1.25) (err1){};

\node [rectangle, fill = CornflowerBlue!40!white,minimum width = 0.5cm, minimum height = 2.5cm, fill opacity = 0.5] at (0.75,-1.25) (err2){};

\draw[] (0,0) grid[xstep=0.5, ystep=0.5] (3,-2.5);

\node[draw=black, thick, name = cw, rectangle, minimum width = 3cm, minimum height = 2.5cm] at (1.5,-1.25) {};

\node[] at (0.75,-0.75) {$1$};
\node[] at (0.75,-1.75) {$1$};
\node[] at (2.25,-1.25) {$1$};

\end{scope}

\end{tikzpicture}
    }
    \caption{Two possible Hamming PBEs with $n = 5$, $t=2$, $m=6$, and $w=2$. That is, $\eseta = \{\60\}\subset\esetb=\HBall{2}{5}{2}$.}
    \label{fig:channel_model}
\end{figure}

\textbf{Artifacts.} %
Code to reproduce the figures and examples is available at \url{https://github.com/sebastianbitzer/pbe}. 
An extended version is available at \url{https://arxiv.org/abs/2501.12280}.

\section{The PBE Adversarial Channel}\label{sec:prelim}
\subsection{Notation}
 
For a set $A$, let $|A|$ denote its cardinality and $2^A$ its power set.
Let $\V$ be a vector space, $A,B\subseteq\V$,
Then, the difference set of $A$ and $B$ is the set $\Delta(A,B)=\{a-b:a\in A, b\in B\}$. 
$\Delta(A)=\Delta(A,A)$ is the difference set of $A$.
For an integer $s$, $[s] = \{1,\ldots,s\}$.

For $q$ a prime power and an array $\6X\in\F_q^{n\times m}$,  let $\col(\6X)$ denote the multiset of its columns.
For $\6x\in\F_q^n$, we denote by $\wtH(\6x)$ its Hamming weight.
For $0\leq t\leq n$, we denote the Hamming-metric ball as $\HBall{q}{n}{t} = \{\6x\in\F_q^n: \wtH(\6x) \leq t\}$.
For $\T = \wtt/n$, it holds that
\begin{equation}\label{eq:general_approximations}
    q^{F_q(\T)n-o(n)}\leq|\HBall{q}{n}{t}|\leq q^{F_q(\T)n}\;, 
\end{equation}%
where $F_q(T) = H_q(\max\{T, \tfrac{q-1}{q}\})$ with $H_q(x)$ the $q$-ary entropy function.

\subsection{Problem Statement} 
 
\noindent We consider the following general communication framework.

\begin{definition}\label{def:adversarial_channel}
    Let $\V$ be a vector space over the finite field $\F_q$.
    An additive adversarial channel on $\V$ with error set $\eset\subseteq\V$ is a function $\Omega:\V\mapsto 2^\V$ associating to each $\6X\in\V$ its fan-out set $\Omega(\6X)=\{\6X+\6E: \6E\in\eset\}= \6X + \eset$. 
\end{definition}

All mentioned channels will respect this definition, and we will simply say that $\Omega$ is a channel over $\V$ throughout the paper.
This and some of the following definitions also work in more general frameworks, see \cite{ravagnani2018adversarial}.
A code in $\V$ is a subset $\code\subseteq\V$; a linear code $\code$ is a linear subspace of $\V$.
A one-shot code for $\Omega$ is a code $\code$ s.t. for any two distinct $X,X'\in\code$ we have $\Omega(\6X)\cap\Omega(\6X')=\emptyset$.
For a linear code, this is equivalent to $\codeC \cap \deset = \{\60\}$, see \cite{loeliger1994basic}.

\begin{example}\label{example:Hamming_channel_1}
    Let $\wtt\leq n\in\N$; a ubiquitous example of an additive adversarial channel is the Hamming Channel $\HC(n,t)$, where $\V=\F_q^n$ and $\eset=\{\6X\in\F_q^n:\;\wtH(\6X)\leq t\} = \HBall{q}{n}{t}$ is the Hamming ball of radius $t$.
    One-shot codes for this channel are precisely the $t$-error correcting codes in $\F_q^n$.
\end{example}

The \emph{one-shot capacity} of a channel $\Omega$ is defined as
\begin{equation}\label{eq:one_shot_capacity}
    C_1(\Omega)\coloneqq\max\biggl\{\!\frac{\log_q(|\code|)}{\dim(\V)}\!:\code\text{ is a one-shot code for }\Omega\!\biggr\}.\!
\end{equation}
$C_1(\Omega)$ is the normalized maximum amount of information that can be transmitted using the channel once and with zero error probability.
This notion is distinct from the \emph{zero error} capacity of a channel $C_0(\Omega)$, which is the maximum rate of error-free communication using the channel multiple times \cite{shannon1956zero,ravagnani2018adversarial}.

Bounds on the one-shot capacity of a channel $\Omega$ are derived by closely examining its error set \cite{loeliger1994basic}.
In particular, for a code $\code\subseteq\V$ with $|\code||\eset|>|\V|$, the pigeonhole principle implies that there exist distinct elements $\6X,\6X'\in\code$ such that $\Omega(\6X)\cap\Omega(\6X')\neq\emptyset$.
Then, $\code$ cannot be a one-shot code for $\Omega$, yielding the following upper bound on $C_1(\Omega)$.

\begin{lemma}[Proposition 1, \cite{loeliger1994basic}]\label{lem:Hamming_adversarial}
    For any channel $\Omega$ over $\V$,  
    \begin{equation*}
        C_1(\Omega)\leq1-\frac{\log_q(|\eset|)}{\dim(\V)}\;.
    \end{equation*}
\end{lemma}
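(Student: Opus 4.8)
The plan is to formalize the pigeonhole observation already made just before the statement into a clean counting inequality. Fix an arbitrary one-shot code $\code$ for $\Omega$. The first step is to note that for every $\6X\in\V$ the fan-out set $\Omega(\6X)=\6X+\eset$ is a translate of $\eset$, so $|\Omega(\6X)|=|\eset|$, since translation by $\6X$ is a bijection of $\V$.

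The second step uses the definition of a one-shot code: the family $\{\Omega(\6X):\6X\in\code\}$ consists of pairwise disjoint subsets of $\V$. Hence
\begin{equation*}
    |\code|\cdot|\eset| \;=\; \sum_{\6X\in\code}|\Omega(\6X)| \;\leq\; |\V| \;=\; q^{\dim(\V)}.
\end{equation*}
Taking $\log_q$ of both sides gives $\log_q(|\code|)+\log_q(|\eset|)\leq \dim(\V)$, i.e. $\tfrac{\log_q(|\code|)}{\dim(\V)}\leq 1-\tfrac{\log_q(|\eset|)}{\dim(\V)}$. Since this holds for every one-shot code, in particular for one attaining the maximum in the definition of $C_1(\Omega)$ in \eqref{eq:one_shot_capacity}, the claimed bound follows.

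There is essentially no hard step here — the whole content is the pigeonhole principle, which the text has already spelled out — so the proof is a two-line formalization. The only point worth a sentence is the degenerate case $\eset=\emptyset$, which would make the right-hand side ill-defined; but in the additive adversarial channel model one always has $\60\in\eset$, so $|\eset|\geq 1$, $\log_q(|\eset|)\geq 0$, and the argument goes through unchanged.
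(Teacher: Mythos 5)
Your proof is correct and matches the paper's own sketch: the one-shot condition makes the fan-out sets (each a translate of $\eset$, hence of size $|\eset|$) pairwise disjoint subsets of $\V$, giving $|\code|\cdot|\eset|\leq|\V|=q^{\dim\V}$, which is exactly the pigeonhole argument the text states immediately before the lemma. One small correction to your closing remark: \Cref{def:adversarial_channel} does not actually impose $\60\in\eset$ (that hypothesis appears only in \Cref{thm:GV_generic}); nonemptiness of $\eset$ is all you need for the logarithm to be well-defined, and that is implicit in having a channel at all.
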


A lower bound for the one-shot capacity can be established via an existence result for linear codes.
This is achieved by considering families of codes with random-like behavior.

\begin{definition}
    A nonempty set $\Fam$ of linear codes is \textit{balanced} if every $\6X\in\V\setminus\{\60\}$ belongs an equal number of codes in $\Fam$.
\end{definition}

Balancedness is the combinatorial equivalence of randomness, in the sense that a random linear code contains any nonzero element with the same probability, just like a code sampled uniformly at random from a balanced family.
Notice that the family of all linear codes in $\V$ of a given dimension, $\Fam_k=\{\code\subseteq\V\;:\;\dim(\code)=k\}$, is balanced.
The following result gives a sufficient condition for the existence of a linear one-shot code for a channel $\Omega$.

\begin{theorem}[Theorem 1, \cite{loeliger1994basic}]\label{thm:GV_generic}
    Let $\Omega$ be a channel over $\V$ with error set $\eset$ containing the zero vector, and let $\Fam$ be a balanced family of codes in $\V$. 
    If $|\Delta(\E)|\leq q^{n-k}$ then $\Fam$ contains a one-shot code for $\Omega$.
    It follows that
    \begin{equation*}
        C_1(\Omega)\geq1-\frac{\log_q(|\Delta(\eset)|)}{\dim(\V)}\;.
    \end{equation*}
\end{theorem}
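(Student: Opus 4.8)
The plan is to establish the existence of a one-shot code inside the balanced family $\Fam$ by a counting argument that bounds the number of ``bad'' codes --- those intersecting $\Delta(\E)$ nontrivially --- and showing this count is strictly less than $|\Fam|$. First I would recall the characterization quoted just before the statement: a linear code $\code$ is a one-shot code for $\Omega$ if and only if $\code\cap\Delta(\E)=\{\60\}$, which holds because $\Omega$ is additive and $\E$ contains $\60$ (so $\Delta(\E)=\E-\E$ captures exactly the differences of confusable received words). Thus it suffices to find some $\code\in\Fam$ avoiding every nonzero element of $\Delta(\E)$.

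Next I would exploit the balancedness of $\Fam$. Let $\lambda$ be the common number of codes in $\Fam$ containing any fixed nonzero vector $\6X\in\V$. Double-counting the pairs $(\6X,\code)$ with $\60\neq\6X\in\code\in\Fam$ gives $\lambda(q^n-1)=\sum_{\code\in\Fam}(|\code|-1)$; more importantly, for the argument I only need the union bound: the number of codes in $\Fam$ that contain at least one nonzero element of $\Delta(\E)$ is at most $\lambda\cdot(|\Delta(\E)|-1) < \lambda\cdot|\Delta(\E)|$, since $\60\in\Delta(\E)$ always. So if I can show $\lambda\cdot|\Delta(\E)| \le |\Fam|$ — or more precisely a strict inequality after accounting for the zero vector — at least one code in $\Fam$ is ``good''. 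Applying balancedness once more, $\lambda/|\Fam|$ equals the probability that a uniformly random code from $\Fam$ contains a fixed nonzero vector, which is at most $(q^k-1)/(q^n-1) < q^{k-n}$ (this bound holds for any balanced family, with equality for $\Fam_k$). Hence the number of bad codes is strictly less than $|\Fam|\cdot q^{k-n}\cdot|\Delta(\E)| \le |\Fam|$ whenever $|\Delta(\E)|\le q^{n-k}$, giving a good code.

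The capacity bound then follows by optimizing $k$. Taking $k$ to be the largest integer with $q^{n-k}\ge|\Delta(\E)|$, i.e. $k = \dim(\V) - \lceil\log_q|\Delta(\E)|\rceil$, the family $\Fam_k$ contains a one-shot code of dimension $k$, so
\begin{equation*}
    C_1(\Omega)\ge\frac{k}{\dim(\V)}\ge 1-\frac{\log_q(|\Delta(\E)|)}{\dim(\V)}\;,
\end{equation*}
which is the claimed inequality.

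The main obstacle, and the step requiring the most care, is the precise bookkeeping of the strict inequality: one must make sure the ``$-1$'' coming from excluding $\60\in\Delta(\E)$ is handled correctly so that $|\Delta(\E)|\le q^{n-k}$ (rather than a strict version) is genuinely sufficient. Concretely, the count of bad codes is at most $\lambda(|\Delta(\E)|-1)$, and $\lambda(|\Delta(\E)|-1) \le \lambda(q^{n-k}-1) = |\Fam|\cdot\frac{q^k-1}{q^n-1}\cdot(q^{n-k}-1) < |\Fam|$ because $(q^k-1)(q^{n-k}-1) < q^n - 1$; this last elementary inequality is what makes the non-strict hypothesis work, and it is worth verifying explicitly rather than hand-waving. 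Everything else — the equivalence with $\code\cap\Delta(\E)=\{\60\}$ and the union bound — is routine given the results already stated in the excerpt.
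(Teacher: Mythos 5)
The paper cites this result from \cite{loeliger1994basic} without reproducing a proof, so there is no in-paper argument to compare against; what you have written is a reconstruction of Loeliger's averaging argument, and its counting core is correct. The reduction to $\code\cap\Delta(\E)=\{\60\}$ is the right starting point (the paper states this equivalence just before the theorem), the double count $\lambda\,(q^{\dim\V}-1)=\sum_{\code\in\Fam}(|\code|-1)=|\Fam|(q^k-1)$ gives $\lambda/|\Fam|=(q^k-1)/(q^{\dim\V}-1)$ \emph{with equality} for any balanced family of $k$-dimensional codes --- your parenthetical ``with equality for $\Fam_k$'' slightly understates this and your ``holds for any balanced family'' would be false if the family mixed dimensions, but the implicit hypothesis throughout is that $\Fam$ is equidimensional of dimension $k$ --- and the union bound $\lambda\,(|\Delta(\E)|-1)<|\Fam|$ together with the explicit check $(q^k-1)(q^{n-k}-1)<q^n-1$ is exactly the bookkeeping that lets the non-strict hypothesis $|\Delta(\E)|\le q^{n-k}$ suffice.

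The one genuine slip is in the capacity deduction at the end. Taking $k=\dim(\V)-\lceil\log_q|\Delta(\E)|\rceil$ gives $k/\dim(\V)\le 1-\log_q(|\Delta(\E)|)/\dim(\V)$, with the inequality \emph{reversed} relative to what you wrote, whenever $\log_q|\Delta(\E)|$ is not an integer. What the existence result actually yields is $C_1(\Omega)\ge 1-\lceil\log_q|\Delta(\E)|\rceil/\dim(\V)$, which can be smaller than the stated bound by up to $1/\dim(\V)$. This rounding imprecision is arguably inherent in the theorem as quoted, and in the paper's usage it is absorbed into the $o(1)$ terms of the asymptotic bounds (cf.\ the Hamming-channel example and \Cref{thm:GV}), but your chain of inequalities asserts the exact bound and does not support it; you should either state the bound with the ceiling, or make the $o(1)$ slack explicit.
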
 

\begin{example}\label{example:Hamming_channel_2}
    Applying \Cref{lem:Hamming_adversarial} and \Cref{thm:GV_generic} to the channel $\HC(n,t)$ of \Cref{example:Hamming_channel_1}, we get the classical Hamming and Gilbert-Varshamov asymptotic bounds for the rate of block codes respectively.
    In fact, when $\eset=\HBall{q}{n}{t}$, the difference set is simply $\Delta(\eset)=\HBall{q}{n}{\min(n,2t)}$
    The asymptotic versions of the bounds are recovered by applying \Cref{eq:general_approximations}: we get    
    \begin{equation}
        1-F_q(t/n)\leq C_1(\HC(n,t))\leq1-F_q(t/n)+o(1).
    \end{equation}
\end{example}

Given the link to the classical counterparts, we will refer to \Cref{lem:Hamming_adversarial} and \Cref{thm:GV_generic} as generic \emph{Hamming} and \emph{Gilbert-Varshamov} bounds.
This work considers a class of channels over $\V=\F_q^{n\times m}$, where the error set models the phased bursts mentioned in \Cref{sect:intro}. 

\begin{definition}\label{def:channel}
    Let $q$ be a prime power, $n,m,w\in\Z_{\geq0}$ with $w\leq m$, and $\eseta\subseteq\esetb$ be arbitrary subsets of $\F_q^n$ s.t. $\60\in\eseta$.
    The set of $(\eseta,\esetb,w)$-Phased Burst Errors ($(\eseta,\esetb,w)$-PBEs) in $\F_q^{n\times m}$ is the set $\E=\E(n,m,\eseta,\esetb,w)$, where
    \begin{align*}
        \eset=\{\6X\in\F_q^{n\times m}&:\forall \6x\in\col(\6X),\6x\in\esetb\textnormal{ and }\\
        &|\{\6x\in\col(\6X):\6x\in\eseta\}|\geq m-w \}.
    \end{align*}
    The Phased Burst Channel (PBC) $\PBC(n,m,\eseta,\esetb,w)$ is the channel on $\F_q^{n\times m}$ with error set $\eset$.
    A $(\eseta,\esetb,w)$-PBE Correcting Code ($(\eseta,\esetb,w)$-PBECC) is a one-shot code for this channel.
\end{definition}

The second condition in the definition implies that at most $w$ columns of a PBE are in $\esetb\setminus\eseta$ (bad columns), while the others are from $\eseta$ (good columns).
The condition that $\60\in\eseta$ allows for error-free uses of the channel and is, therefore, natural.
The case where $\eseta=\{\60\}$ (the bursts affect up to $w$ columns, while the good columns are error-free) is included in the definition.
No constraint on the position of the bad columns is assumed.

For the setting of \Cref{def:channel}, $\dim(\V)=nm$ in \Cref{eq:one_shot_capacity}, 
and $C_1(\PBC(n,m,\eseta,\esetb,w))=\log_q(A(n,m,\eseta,\esetb,w))/nm$.
Hence, \Cref{quest:codesize} can be equivalently stated as

\begin{question}\label{quest:capacity}
    Given $n$, $m$, $\{0\}\subseteq\eseta\subseteq\esetb\subseteq\F_q^n$ and $w$, what is the one-shot capacity $C_1(\PBC(n,m,\eseta,\esetb,w))$ of the Phased Burst Channel?
\end{question}

While the two questions are equivalent, \Cref{quest:capacity} shifts the focus from the code size to the channel itself, and thus to the error set.
Similar to \Cref{example:Hamming_channel_2}, we are also interested in the asymptotic behavior of $C_1(\PBC(n,m,\eseta,\esetb,w))$.

\begin{definition}\label{def:PBE}
    Let $M>0,\W\in[0,1]$ be fixed constants, then a sequence of PBEC is a sequence
    $\PBC(n,M,\eseta,\esetb,\W)=(\PBC(n,m,\eseta(n),\esetb(n),w))_{n\in\N}$ such that for any $n$ we have $m=Mn$ and $w=\W m$.
    The associated sequence of PBE sets of these channels is denoted by $\eset(n)=\eset(n,M,\eseta,\esetb,W)$.
    We write $\PBC(n)=\PBC(n,M,\eseta,\esetb,\W)$, and $\eset(n)=\eset(n,M,\eseta,\esetb,\W)$ to highlight the dependence on $n$ when everything else is clear from the context.
\end{definition}

Given a sequence of PBEC, we look at the size of the associated PBE sets as $n\rightarrow\infty$ and infer asymptotic bounds on the one-shot capacity of the channels, similarly to the ones in \Cref{example:Hamming_channel_2} for the Hamming channel.
To illustrate the behavior of our general results, we consider Hamming-metric bursts as a particular case of \Cref{def:PBE}.

\begin{definition}[Hamming PBC/PBEs]\label{example:PBE_Hamming_ball}
Let $\wtt\leq n$. 
Then, the Hamming Phased Burst Channel is the channel $\HPBC(n,m,\wtt,w)=\PBC(n,m,\HBall{q}{n}{0},\HBall{q}{n}{\wtt},w)$.
The associated error set is denoted as $\Heset(n,m,\wtt,w)$ (Hamming PBEs, H-PBEs).
\end{definition}

A graphical representation of two elements of a binary Hamming PBE set is given in \Cref{fig:channel_model}: error columns are highlighted in light blue, and blanks are zeros.
Coding problems related to this channel are widely studied in the literature; see \cite{gabrys2012graded,roth2014coding}.
Unlike other authors, we formulate our problem more generally, studying fundamental bounds and constructions while also considering the asymptotic setting where the number of errors is a fraction of the length.

\section{Bounds on the Maximal Code Size}\label{sec:bounds}
This section analyzes the asymptotic size of sequences of PBE sets $\eset(n)=\eset(n,M,\eseta,\esetb,W)$ (and their corresponding difference sets) to derive bounds on the one-shot capacity of the associated channel sequence $\PBC(n,M,\eseta,\esetb,W)$.

\begin{definition}[Admissible sequences of PBE sets/channels]\label{def:admissible}
Let $M>0$, and $\W,\ca,\cb,\caa,\cab,\cba,\cbb\in[0,1]$ be fixed constants.
A sequence of PBE sets $\eset(n,M,\eseta,\esetb,\W)$ is \emph{admissible} if $|\eset_j(n)|=q^{c_jn\pm o(n)}$ and $|\eset_{ij}(n)|=q^{c_{ij}n\pm o(n)}$, where $\eset_{ij}(n) = \Delta(\eset_i(n),\eset_j(n))$, $i,j=1,2$.
The sequence of channels associated with such a sequence of error sets is an admissible sequence of PBCs.
For brevity, the parameter $n$ is omitted when clear from the context.
\end{definition}

We begin by estimating the asymptotic size of $\eset(n)$.

\begin{theorem}[PBE Hamming bound]\label{thm:PBE_Hamm}
For an admissible sequence of error sets $\eset(n)=\eset(n,M,\eseta,\esetb,\W)$, we have
\[
q^{((1-\W) \ca+\W\cb-o(1))Mn^2}\leq|\eset(n)|\leq q^{((1-\W)\ca+\W\cb+o(1))Mn^2}
\]
and the one-shot capacity of the corresponding sequence of channels $\PBC(n)=\PBC(n,M,\eseta,\esetb,\W)$ is bounded as
\begin{equation*}
    C_1(\PBC(n))\leq R_\mathrm{H}+o(1)\textnormal{, with }R_\mathrm{H}=1-(1-\W )\ca-\W\cb.
\end{equation*}
\end{theorem}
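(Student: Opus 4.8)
The plan is to count the elements of $\eset(n)$ directly by exploiting the product-like structure of phased burst errors: an error array is built by choosing which columns are ``bad'' (at most $w$ of them) and then populating each column independently, bad columns from $\esetb$ and good columns from $\eseta$. First I would establish the two-sided estimate on $|\eset(n)|$; the upper bound on $C_1$ then follows immediately by plugging this estimate into \Cref{lem:Hamming_adversarial} with $\dim(\V)=nm=Mn^2$.

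For the upper bound on $|\eset(n)|$, I would observe that every array in $\eset(n)$ is obtained by selecting a subset $S\subseteq[m]$ of size exactly $w$ of ``potentially bad'' positions (padding if fewer columns are actually bad, which is harmless since $\60\in\eseta\subseteq\esetb$), filling the columns indexed by $S$ with arbitrary elements of $\esetb$ and the remaining $m-w$ columns with arbitrary elements of $\eseta$. This gives
\[
|\eset(n)|\leq \binom{m}{w}\,|\esetb|^{w}\,|\eseta|^{m-w}.
\]
Using $m=Mn$, $w=\W m=\W Mn$, the binomial coefficient contributes only $2^{m}=q^{o(n^2)}$ (in fact $q^{o(n)\cdot n}$), while admissibility gives $|\eseta|=q^{\ca n\pm o(n)}$ and $|\esetb|=q^{\cb n\pm o(n)}$. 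Multiplying out the exponents yields exponent $\bigl(\W M \cb + (1-\W)M\ca\bigr)n^2 \pm o(n^2)$, which is the claimed upper bound. For the matching lower bound, I would restrict to arrays whose bad set is exactly a fixed $S$ of size $w$: this already produces $|\esetb|^{w}|\eseta|^{m-w}=q^{((1-\W)\ca+\W\cb-o(1))Mn^2}$ distinct elements, all lying in $\eset(n)$, giving the lower estimate.

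Having pinned down $\log_q|\eset(n)| = ((1-\W)\ca+\W\cb\pm o(1))Mn^2$, I apply \Cref{lem:Hamming_adversarial}:
\[
C_1(\PBC(n))\leq 1-\frac{\log_q(|\eset(n)|)}{nm}\leq 1-\frac{((1-\W)\ca+\W\cb-o(1))Mn^2}{Mn^2}=R_\mathrm{H}+o(1),
\]
which is exactly the assertion with $R_\mathrm{H}=1-(1-\W)\ca-\W\cb$.

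I expect no serious obstacle here; the statement is essentially a bookkeeping exercise. The only mild subtlety is making sure the combinatorial prefactor $\binom{m}{w}$ truly is absorbed into the $o(n^2)$ error term: since $\binom{m}{w}\le 2^m = 2^{Mn}$, its logarithm is $O(n)=o(n^2)$, so it is negligible against the $\Theta(n^2)$ main term. A second minor point is confirming that one may always take the bad set to have size exactly $w$ (rather than at most $w$) in both bounds; this is legitimate precisely because $\60\in\eseta$, so any ``missing'' bad columns can be taken to be zero without leaving $\eset(n)$, and conversely elements with fewer than $w$ genuinely bad columns are already counted when $S$ ranges over all $w$-subsets. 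These observations aside, the proof is a direct application of the generic Hamming bound of \cite{loeliger1994basic} to the explicitly enumerated error set.
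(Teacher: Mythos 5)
Your proof is correct and follows essentially the same route as the paper: you derive the two-sided bound $|\eseta|^{m-w}|\esetb|^{w}\leq|\eset(n)|\leq\binom{m}{w}|\eseta|^{m-w}|\esetb|^{w}$, absorb the polynomial/binomial prefactor into the $o(1)$ exponent, and apply the generic Hamming bound (\Cref{lem:Hamming_adversarial}) with $\dim(\V)=nm=Mn^{2}$. The only difference is that you spell out the ``pad the bad set to size exactly $w$'' bookkeeping, which the paper leaves implicit, but this is the same counting argument.
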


\begin{proof}
Looking at the structure of $\eset(n)$ we find that
\begin{equation*}
    |\eseta(n)|^{m-w}|\esetb(n)|^w\leq|\eset(n)|\leq\binom{m}{w}|\eseta(n)|^{m-w}|\esetb(n)|^w.
\end{equation*}
and using the fact that $\eset(n)$ is admissible the bounds on $|\eset(n)|$ follow.
From the lower bound we obtain
\begin{equation*}
    (1-\W)\ca+\W\cb-o(1)\leq\frac{\log_q(|\eset(n)|)}{mn}
\end{equation*}
by taking the logarithm and normalizing, and the bound on the one-shot capacity follows from \Cref{lem:Hamming_adversarial}.
\end{proof}

\begin{example}
Let $M>0$, $\W\in[0,1]$, and for all $n\in\N$ let $0\leq t_n\leq n$ be such that $\lim_{n\rightarrow\infty}\frac{t_n}{n}=T\in[0,1]$.
Then, the sequence of Hamming PBE sets $\Heset(n,Mn,Tn,\W m)$ (Def.~\ref{example:PBE_Hamming_ball}) is admissible according to Def.~\ref{def:admissible}.
We have $c_1=0$, $c_2=F_q(T)$ (\Cref{eq:general_approximations}), and since the difference set of two Hamming balls is a Hamming ball, $c_{11}=0$, $c_{12}=c_{21}=F_q(T)$, and $c_{22}=F_q(2T)$.
\end{example}
 
\begin{corollary}[H-PBE Hamming bound]\label{corr:HPBE_H}
Let $\eseta = \{\60\}$, $\esetb = \HBall{q}{n}{\wtt}$ with $\wtt = \T\cdot n$.
Then, $R_\mathrm{H} = 1-\W\cdot F_q(\T)$.
\end{corollary}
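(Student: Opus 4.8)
The plan is to specialize Theorem~\ref{thm:PBE_Hamm} to the Hamming setting of Definition~\ref{example:PBE_Hamming_ball}. Recall that there $R_\mathrm{H} = 1 - (1-\W)\ca - \W\cb$, so the only task is to identify the constants $\ca$ and $\cb$ for the sequence of error sets $\Heset(n,Mn,\T n,\W m)$, i.e.\ to compute the exponential growth rates of $|\eseta(n)|$ and $|\esetb(n)|$ where $\eseta(n)=\HBall{q}{n}{0}=\{\60\}$ and $\esetb(n)=\HBall{q}{n}{\T n}$.

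First I would observe that $\eseta(n)=\{\60\}$ is a singleton, so $|\eseta(n)| = 1 = q^{0\cdot n}$, which gives $\ca = 0$ (this trivially fits the admissibility form $q^{\ca n \pm o(n)}$). Next I would invoke the approximation~\eqref{eq:general_approximations}: with $\T = \wtt/n$ we have $q^{F_q(\T)n - o(n)} \leq |\HBall{q}{n}{\wtt}| \leq q^{F_q(\T)n}$, so $|\esetb(n)| = q^{F_q(\T)n \pm o(n)}$ and hence $\cb = F_q(\T)$. Substituting $\ca = 0$ and $\cb = F_q(\T)$ into the formula for $R_\mathrm{H}$ from Theorem~\ref{thm:PBE_Hamm} yields
\[
R_\mathrm{H} = 1 - (1-\W)\cdot 0 - \W\cdot F_q(\T) = 1 - \W\cdot F_q(\T),
\]
which is exactly the claimed identity.

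There is essentially no obstacle here: the corollary is a direct substitution into the already-proved bound, and the only mild point to check is that the admissibility hypothesis of Theorem~\ref{thm:PBE_Hamm} is met so that the theorem applies. That was already verified in the Example immediately preceding the corollary (the sequence $\Heset(n,Mn,\T n,\W m)$ is admissible with $\ca=0$, $\cb = F_q(\T)$, and the difference-set constants given by Hamming balls as well), so I would simply cite that Example. If one wanted a self-contained argument, the admissibility check is itself immediate from~\eqref{eq:general_approximations} together with the fact that the difference set of two Hamming balls of radii $r_1,r_2$ is the Hamming ball of radius $\min(n, r_1+r_2)$.
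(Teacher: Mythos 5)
Your proof is correct and follows the same route as the paper's own proof: identify $\ca=0$ and $\cb=F_q(\T)$ from \Cref{eq:general_approximations}, check admissibility (via the preceding Example / the Hamming-ball difference-set computation), and substitute into the expression for $R_\mathrm{H}$ from \Cref{thm:PBE_Hamm}. No meaningful difference from the paper's argument.
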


Having found an upper bound on $|\eset(n)|$ and the maximum possible code rate in the associated channels (Hamming bound), we now analyze $|\Delta(\eset(n))|$ to obtain a lower (Gilbert-Varshamov) bound on $C_1(\PBC(n))$.
 
\begin{theorem}[PBE GV bound\label{thm:GV}]
For an admissible sequence of error sets $\eset(n)=\eset(n,M,\eseta,\esetb,\W)$, it holds that
\begin{equation*}
    \log_q(|\Delta(\eset(n))|)\leq \alpha Mn^2+o(n), 
\end{equation*}
where, if $\caa+\cbb\leq2\cab$, we have
\begin{equation*}
    \alpha=\begin{cases}
        (1-2\W)\caa+2\W\cab,&\textnormal{ if }2\W\leq1,\\
        2(1-\W)\cab+(2\W-1)\cbb,&\textnormal{ if }2\W>1,
    \end{cases}
\end{equation*}
and otherwise $\alpha=(1-\W)\caa+\W\cbb$.
In all cases, let $R_\mathrm{GV} = 1-\alpha$; then the one-shot capacity of the associated sequence of channels is bounded as
\begin{equation*}
    C_1(\PBC(n))\geq R_\mathrm{GV}-o(1).
\end{equation*}
\end{theorem}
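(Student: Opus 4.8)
The plan is to describe the difference set $\Delta(\eset(n))$ column by column, turn the resulting size estimate into a small linear program over the ``bad--column profile'', solve that program by a short case analysis, and finally invoke \Cref{thm:GV_generic}. To unfold $\Delta(\eset(n))$: every $\6Z\in\Delta(\eset(n))$ equals $\6X-\6X'$ for two $(\eseta,\esetb,w)$-PBEs, and writing $S,S'\subseteq[m]$ for the sets of bad columns of $\6X,\6X'$ (so $|S|,|S'|\le w$), column $j$ of $\6Z$ lies in $\esetbb$ when $j\in S\cap S'$, in $\esetba$ when $j\in S\setminus S'$, in $\esetab$ when $j\in S'\setminus S$, and in $\esetaa$ otherwise; conversely every such array is a difference of two PBEs. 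Grouping the pairs $(S,S')$ by the triple $(a,b,c)=(|S\cap S'|,|S\setminus S'|,|S'\setminus S|)$, which ranges over $a+b\le w$, $a+c\le w$, $a,b,c\ge0$, $a+b+c\le m$, bounding the number of pairs with a fixed triple by $8^m$ and the number of triples by $(m+1)^3$, the union bound gives
\[
|\Delta(\eset(n))|\le(m+1)^3\,8^m\max_{(a,b,c)}|\esetbb|^{a}\,|\esetba|^{b}\,|\esetab|^{c}\,|\esetaa|^{m-a-b-c}.
\]
Since $\esetba=-\esetab$ we have $\cba=\cab$, and since $\eseta\subseteq\esetb$ we have $\caa\le\cab\le\cbb$. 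Taking $\log_q$, using admissibility, and writing $(a,b,c)=(\alpha'm,\beta'm,\gamma'm)$ with $m=Mn$ and $w=\W m$, the prefactor $(m+1)^38^m$ contributes only an $o(n^2)$ term to the exponent, so $\log_q|\Delta(\eset(n))|\le\alpha Mn^2+o(n^2)$ with
\[
\alpha=\max\bigl\{\caa(1-\alpha'-\beta'-\gamma')+\cab(\beta'+\gamma')+\cbb\alpha':\ \alpha'+\beta'\le\W,\ \alpha'+\gamma'\le\W,\ \alpha'+\beta'+\gamma'\le1,\ \alpha',\beta',\gamma'\ge0\bigr\}.
\]

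Next I would solve this linear program. Its objective and feasible region are symmetric under swapping $\beta'$ and $\gamma'$, so it suffices to optimize over $\beta'=\gamma'$; equivalently, introducing $s=\beta'+\gamma'$, the attainable range given $\alpha'$ is $0\le s\le\min\{2(\W-\alpha'),1-\alpha'\}$ and the objective becomes $\caa+(\cbb-\caa)\alpha'+(\cab-\caa)s$, which is nondecreasing in both $\alpha'$ and $s$ because $\caa\le\cab\le\cbb$. Pushing $s$ to its maximum and optimizing over $\alpha'\in[0,\W]$ splits according to whether $\alpha'\ge2\W-1$ (then $s=2(\W-\alpha')$, and the objective is affine in $\alpha'$ with slope $\cbb+\caa-2\cab$) or $\alpha'\le2\W-1$ (then $s=1-\alpha'$, with slope $\cbb-\cab\ge0$). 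When $\caa+\cbb\le2\cab$ the first slope is $\le0$, so $\alpha'$ is driven to the smallest feasible value: $\alpha'=0$ if $2\W\le1$, giving the vertex $(0,\W,\W)$ and $\alpha=(1-2\W)\caa+2\W\cab$; and $\alpha'=2\W-1$ if $2\W>1$, giving $(2\W-1,1-\W,1-\W)$ and $\alpha=2(1-\W)\cab+(2\W-1)\cbb$. When instead $\caa+\cbb>2\cab$ the first slope is $>0$, $\alpha'$ is driven to $\W$, the vertex is $(\W,0,0)$ with $\alpha=(1-\W)\caa+\W\cbb$, and one checks this value also beats the $\alpha'\le2\W-1$ branch (the difference is $(1-\W)(\caa+\cbb-2\cab)\ge0$). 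These are exactly the three cases in the statement.

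Finally, the all--zero array lies in $\eset(n)$ and $\dim(\V)=nm$, so \Cref{thm:GV_generic} yields
\[
C_1(\PBC(n))\ge1-\frac{\log_q|\Delta(\eset(n))|}{nm}\ge1-\alpha-o(1)=\RGV-o(1).
\]
I expect the main obstacle to be the linear--program case analysis, specifically confirming that the listed vertices are globally optimal for every sign of $\caa+\cbb-2\cab$ and every $\W\in[0,1]$ and ruling out the remaining vertices of the polytope, rather than the difference--set bookkeeping, which is routine.
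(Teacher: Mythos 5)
Your proof is correct and takes essentially the same approach as the paper: you decompose each element of $\Delta(\eset(n))$ by its column-type profile, bound $|\Delta(\eset(n))|$ by a multinomial coefficient times a product of difference-set sizes, optimize the exponent over the feasible profiles, and conclude via \Cref{thm:GV_generic}. Your explicit LP parameterization in $(a,b,c)$ with the symmetry reduction to $s=\beta'+\gamma'$ is a slightly more systematic rendering of the paper's ``set $x=y=w$, then optimize over $z$'' argument, and the resulting case analysis and optimal vertices match the paper's three cases exactly.
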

\begin{proof}
Any element of $\Delta(\eset(n))$ is of the form $\6X-\6Y$ with $\6X,\6Y\in\eset(n)$.
Let $\6X$ have $x \leq w$ columns in $\esetb$ and $m-x$ columns in $\eseta$.
Let $\6Y$ have $y \leq w$ columns in $\esetb$ and $m-y$ columns in $\eseta$.
Then, $\6X-\6Y$ with $z$ columns in $\esetbb$ has $x-z$ columns in $\esetab$,  $y-z$ columns in $\esetba$, and  $m-x-y-z$ columns in $\esetaa$.
Due to $|\esetab| = |\esetba|$, counting possible column vectors and their permutations, we bound $|\Delta(\eset(n))|$ from above as
\begin{align*}
&\sum_{x,y,z}
\binom{m}{y-z,x-z,z}|\esetbb|^{z} |\esetab|^{x+y-2z} |\esetaa|^{m-x-y+z} \\
= &2^{O(n)} \max_{x,y,z}
|\esetbb|^{z} |\esetab|^{x+y-2z} |\esetaa|^{m-x-y+z},
\end{align*}
for $\max\{0,x+y-m\}\leq z\leq \min\{x,y\}$.
As $|\esetaa|\leq|\esetab|\leq|\esetbb|$, the expression is maximized for $x=y=\wtw$ and we get
\[
|\Delta(\eset(n))| \leq 2^{O(n)} |\esetaa|^{m-2\wtw} \cdot |\esetab|^{2\wtw} \max_z
\frac{|\esetaa|^z\cdot |\esetbb|^{z}}{|\esetab|^{2z}}.
\]
If $|\esetaa|\cdot|\esetbb| \leq |\esetab|^2$, the maximum is obtained for the minimal $z$, i.e., $z = \max\{0,2w-m\}$.
For $2\wtw \leq m$, we get 
\[
\frac{\log_q(|\Delta(\eset(n))|)}{n\cdot m} \leq (1-2\W)\caa + 2\W\cab + o(1)\; ,
\]
and, for $2\wtw > m$, $z=2\wtw-m$ yields
\[
\frac{\log_q(|\Delta(\eset(n))|)}{n\cdot m} \leq  2(1-\W)\cab +(2\W-1)\cbb + o(1)\;.
\]
In case of $|\esetaa|\cdot|\esetbb| > |\esetab|^2$, the maximum is obtained for the maximum possible $z$; that is, $z = \wtw$, which results in
\[
\frac{\log_q(|\Delta(\eset(n))|)}{n\cdot m} \leq  (1-\W) \caa +\W\cbb + o(1)\;.
\]
The bound on the one-shot capacity follows from Thm.~\ref{thm:GV_generic}.
\end{proof}

\begin{remark}\label{rem:both_relevant}
$\caa+\cab\leq2\cab$ or rather $|\esetaa|\cdot |\esetbb| \leq |\esetab|^2$ can be considered the standard case in \Cref{thm:GV}, with many choices of $\eseta\subset\esetb$ falling into this category: 
\begin{itemize}[leftmargin=*]
    \item \textbf{Max-norm}: $\eseta = \{-a,\ldots,a\}^n$,  $\esetb = \{-b,\ldots,b\}^n$, $a\leq b$,
    \item \textbf{Hamming metric}: $\eseta = \HBall{q}{n}{\wtta}$,  $\esetb = \HBall{q}{n}{\wttb}$, $\wtta\leq \wttb$,
    \item \textbf{Lin.~subspaces}: $\eseta = \langle\6e_1,\ldots,\6e_r\rangle$,$\esetb = \langle\6e_1, \ldots,\6e_s\rangle$,$r\leq s$.
\end{itemize}
However, there are also cases where the opposite is true.
Let 
\[
\eseta = \{0,3,7\}^n \subset \esetb = \{-4,0,3,7,10\}^n.
\]
Then, for $q$ sufficiently large, $|\esetaa|=7^n$, $|\esetab| = 9^n$, and $|\esetbb|=13^n$.
Hence, $|\esetaa|\cdot |\esetbb|=91^n$ and $|\esetab|=81^n$.
\end{remark}

For Hamming PBEs, $\caa + \cbb\leq2\cab$ is satisfied, and we obtain the following expressions.

\begin{corollary}[H-PBE GV bound]\label{corr:HPBE_GV}
Let $\eseta = \{\60\}$, and $\esetb = \HBall{q}{n}{\wtt}$ with $\wtt = \T\cdot n$.
For $2\W\leq 1$, the rate of any $(t,w)$-PBECC is lower bounded by $R_\mathrm{GV} = 1 - 2\W \cdot F_q(\T)$;
else, $R_\mathrm{GV} =1- 2(1-\W)F_q(\T) - (2\W-1)F_q(2\T)$.
\end{corollary}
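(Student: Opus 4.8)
The plan is to specialize \Cref{thm:GV} to the Hamming-metric parameters and just read off the formula for $\alpha$. First I would record the relevant exponents for this instance: with $\eseta=\{\60\}$ and $\esetb=\HBall{q}{n}{\wtt}$, $\wtt=\T n$, the difference set of two Hamming balls of radii $t_1,t_2$ is the Hamming ball of radius $\min(n,t_1+t_2)$, so $\esetaa=\Delta(\{\60\},\{\60\})=\{\60\}$ giving $\caa=0$; $\esetab=\esetba=\Delta(\{\60\},\HBall{q}{n}{\wtt})=\HBall{q}{n}{\wtt}$ giving $\cab=\cba=F_q(\T)$ by \Cref{eq:general_approximations}; and $\esetbb=\Delta(\HBall{q}{n}{\wtt},\HBall{q}{n}{\wtt})=\HBall{q}{n}{\min(n,2\wtt)}$ giving $\cbb=F_q(2\T)$ (with the convention $F_q(2\T)=F_q(1)$ when $2\T\ge1$, which the definition $F_q(T)=H_q(\max\{T,(q-1)/q\})$ already handles gracefully).

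Next I would check which branch of \Cref{thm:GV} applies. The hypothesis $\caa+\cbb\leq 2\cab$ reads $0+F_q(2\T)\leq 2F_q(\T)$; this is exactly the subadditivity-type inequality for the function $F_q$, equivalently $|\esetaa|\cdot|\esetbb|\leq|\esetab|^2$ since the difference set of two Hamming balls sits inside the Minkowski-style sum, and it is noted in \Cref{rem:both_relevant} that the Hamming metric falls into the standard case. Concretely, $|\HBall{q}{n}{2\wtt}|\le|\HBall{q}{n}{\wtt}|^2$ because summing two weight-$\le\wtt$ vectors yields weight $\le 2\wtt$, so every element of the larger ball is a sum of two elements of the smaller one, i.e.\ the first term of the count dominates. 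Hence only the first case of $\alpha$ is relevant, and I substitute: for $2\W\le1$, $\alpha=(1-2\W)\caa+2\W\cab=(1-2\W)\cdot0+2\W F_q(\T)=2\W F_q(\T)$, so $\RGV=1-2\W F_q(\T)$; for $2\W>1$, $\alpha=2(1-\W)\cab+(2\W-1)\cbb=2(1-\W)F_q(\T)+(2\W-1)F_q(2\T)$, so $\RGV=1-2(1-\W)F_q(\T)-(2\W-1)F_q(2\T)$. The capacity bound $C_1(\PBC(n))\ge\RGV-o(1)$ is then immediate from \Cref{thm:GV}, and translating back via $C_1(\HPBC(n,m,\wtt,w))=\log_q(A(\cdots))/nm$ gives the claimed lower bound on the rate of any $(t,w)$-PBECC.

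I do not expect a genuine obstacle here — this corollary is a direct instantiation. The only point requiring a line of care is verifying $\caa+\cbb\le2\cab$, i.e.\ that we are in the standard branch; this is the subadditivity of $F_q$ restricted to the relevant range, which follows from $H_q$ being concave with $H_q(0)=0$ (so $H_q(2x)\le2H_q(x)$ on $[0,(q-1)/q]$) together with the clamping $\max\{T,(q-1)/q\}$ in the definition of $F_q$. A secondary bookkeeping point is the edge case $2\T\ge1$: there $F_q(2\T)=H_q((q-1)/q)=1$ and one should confirm the stated formula still matches what \Cref{thm:GV} outputs, which it does since the $o(n)$ and $\min(n,\cdot)$ conventions are already baked into \Cref{eq:general_approximations} and into the admissibility of the Hamming PBE sequence established in the preceding example. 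Everything else is substitution.
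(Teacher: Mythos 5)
Your proof is correct and follows exactly the same route as the paper: compute $\caa=0$, $\cab=F_q(\T)$, $\cbb=F_q(2\T)$ from the structure of the Hamming-ball difference sets, observe that the standard branch $\caa+\cbb\le 2\cab$ of \Cref{thm:GV} applies, and substitute. The only addition is your explicit verification of the subadditivity $F_q(2\T)\le 2F_q(\T)$, which the paper merely asserts in the sentence preceding the corollary; this is a worthwhile extra line of rigor but does not change the argument.
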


Every $(w\cdot t)$-error-correcting code in $\F_q^{nm}$ is trivially a ($t$,$w$)-PBEC, and these codes are subject to the Hamming and GV bounds in \Cref{example:Hamming_channel_2}.
It is natural to ask how these rates compare with the presented bounds for Hamming-metric PBEs.

\begin{comparison}\label{comparison:PBE_vs_Hamm}
Let $\RH$ and $R_\mathrm{GV}$ be the quantities defined in \Cref{corr:HPBE_H,corr:HPBE_GV}. 
Comparing with the asymptotic Hamming and GV bounds for block codes (\Cref{example:Hamming_channel_2}), we find
\[
\RH \geq 1- F_q(\W\T)\textnormal{ and } R_\mathrm{GV}\geq 1 - F_q(2\W\T),
\]
implying that the upper and lower bounds on the rate increase asymptotically when accounting for the structure of the PBEs, instead of blindly correcting any $\wtt\wtw=\T\W \cdot nm$ errors.
\end{comparison}

Finally, we note that there are instances where the rate $R_\mathrm{GV}$ of Cor.~\ref{corr:HPBE_GV} cannot be achieved by a $(\wtt\cdot \wtw)$-error-correcting code.

\begin{example}\label{ex:HammingPBE_vs_ClassicalHamming}
Let $q=2$, $\T = \tfrac{1}{5}$, and $\W=\tfrac{1}{12}$. 
Then, by the classical Hamming bound, every code correcting a fraction of $\W\cdot\T =\tfrac{1}{60}$ errors has a rate of at most $0.878$.
For Hamming PBEs, \Cref{corr:HPBE_GV} implies that rate $0.880$ is achievable.
\end{example}

\section{Code Construction}\label{sec:constr}
It is natural to align the structure of a code construction with the structure of the errors it is required to correct.
Generalized Concatenated Codes (GCCs) follow this principle, coding both within and over different columns \cite{zinov1976generalized}.
For particular cases, the correction of burst error was studied in \cite{zinov1981generalized, zinov1979correction}.
We refer to \cite{dumer1998concatenated} for a comprehensive introduction.

In the following, $[n,k, \deset]_q$ denotes a $k$-dimensional linear code in $\F_q^n$ that can correct all error patterns $\eset$.
For $\eset = \HBall{q}{n}{t}$, we simply write $[n,k, 2t+1]_q$ as is common.

\begin{construction}[GCC \cite{zinov1976generalized}]\label{const:gcc}
Let inner codes $\codeB_s\subset \cdots \subset \codeB_1$ be given with $\codeB_j = [n,k_j,d_j]_q$. 
Let outer codes $\codeA_1, \ldots, \codeA_s$ be given with $\codeA_j = [m,K_j, D_j]_{q^{(k_j - k_{j+1})}}$ for $j<s$ and $\codeA_s = [m,K_s, D_s]_{q^{k_s}}$.
Denote as $\oplus$ the direct sum and as $\otimes$ the (tensor) concatenation.
Then, an $s$-level GCC is the code
\[
\code(\codeA_{[s]},\codeB_{[s]})\coloneqq \bigoplus_{j=1}^{s-1}\big(\codeA_j \otimes \left(\codeB_{j}/ \codeB_{j+1}\right) \big) \oplus \left(\codeA_s \otimes \codeB_s\right) \subset \F_q^{n\times m}.
\]
\end{construction}

\begin{example}\label{ex:gcc}
Pick $\codeB_1=[4,3,2]_2 \subset \codeB_2 = [4,1,4]_2$ and $\codeA_1 =\langle (1,1)\rangle_{\F_4}$, $\codeA_2 = \F_2^2$.
Then, $\codeB_1 / \codeB_1 = \langle(1\,1\,0\,0),(1\,0\,1\,0)\rangle$ and 
\[
\code(\codeA_{[2]},\codeB_{[2]}) =\left\langle
\begin{pmatrix}
1 & 0 \\
1 & 0 \\
1 & 0 \\
1 & 0 \\
\end{pmatrix}\!,
\begin{pmatrix}
0 & 1 \\
0 & 1 \\
0 & 1 \\
0 & 1 \\
\end{pmatrix}\!,
\begin{pmatrix}
0 & 0 \\
1 & 1 \\
0 & 0 \\
1 & 1 \\
\end{pmatrix}\!,
\begin{pmatrix}
0 & 0 \\
0 & 0 \\
1 & 1 \\
1 & 1 \\
\end{pmatrix}
\right\rangle. 
\]
\end{example}

Similar to the well-known lower bound on the minimum distance, we can make the following general observation about when \Cref{const:gcc} is $(\eseta,\esetb,w)$-PBEC.

\begin{property}\label{gcc:ecc}
The code $\code(\codeA_{[s]},\codeB_{[s]})$ constructed in \Cref{const:gcc} has dimension $\sum_{j=1}^s k_j K_j$.
$\code(\codeA_{[s]},\codeB_{[s]})$ is guaranteed to be $(\eseta,\esetb,w)$-PBEC if for each level $j\in[s]$
\begin{align*}
            D_j > 2w&\text{ and }\codeB_j \cap \esetaa =\{\60\},\\
\text{or\hspace{2em}}  D_j > \hphantom{2}w&\text{ and }\codeB_j \cap \esetab =\{\60\},\\
\text{or\hspace{2em}}  \hphantom{D_j > 2w}&\hphantom{\text{ and }}\codeB_j \cap \esetbb =\{\60\}.\\
\end{align*}
\end{property}

\begin{proof}
The dimension is computed as usual.
$\codeC(\codeA_{[s]}, \codeB_{[s]})$ is $(\eseta,\esetb,w)$-PBEC iff $\deset\cup\codeC(\codeA_{[s]}, \codeB_{[s]}) = \{ \60 \}$.
This is the case if $\deset \cup \big(\codeA_j \otimes \left(\codeB_{j}/ \codeB_{j+1}\right) \big) =\{\60\}$ $\forall j\in[s-1]$ and $\deset \cup \big(\codeA_s \otimes \codeB_{s} \big) =\{\60\}$.
Let $\6C$ be nonzero codeword of $\big(\codeA_j \otimes \left(\codeB_{j}/ \codeB_{j+1}\right) \big)$.
Then, $\6C$ has at least $D_j$ nonzero columns.
\begin{description}
    \item[$\codeB_j \cap \esetbb =\{\60\}$:] 
    Since no element of $\deset$ has a column in $\F_q^n\setminus\esetbb$, $D_j > 0 $ is sufficient to guarantee that $\6C\notin\deset$.
    \item[$\codeB_j \cap \esetab =\{\60\}$:] 
    At most $w$ columns of an element in $\deset$ are in $\F_q^n\setminus(\esetab\cup\esetba)$.
    As $\codeB_j$ is linear, $\codeB_j \cap \esetab = \{\60\} \iff \codeB_j \cap (\esetab\cup\esetba) = \{\60\}$.
    Hence, each nonzero column of $\6C$ is in $\F_q^n\setminus(\esetab\cup\esetba)$,
    and $D_j > w$ implies $\6C\notin\deset$.
    \item[$\codeB_j \cap \esetaa =\{\60\}$:]
    At most $2w$ columns of an element in $\deset$ are in $\F_q^n\setminus\esetaa$.
    Since each nonzero column of $\6C$ is in $\F_q^n\setminus\esetaa$,
    $D_j > 2w$ guarantees that $\6C\notin\deset$.
\end{description}
The case $\6c \in \big(\codeA_s \otimes \codeB_{s} \big) \setminus\{\60\}$ follows in the same way.
\end{proof}

We illustrate \Cref{gcc:ecc} in the case of Hamming PBEs.

\begin{corollary}\label{gcc:Hecc}
$\code(\codeA_{[s]},\codeB_{[s]})$ is $(t,w)$-PBEC if $\forall j\in[s]$
\begin{align*}
            D_j > 2w&,\\
\text{or\hspace{2em}}  D_j > \hphantom{2}w&\text{ and }d_j > \hphantom{2}t,\\
\text{or\hspace{2em}}  \hphantom{D_j > 2w}&\hphantom{\text{ and }}d_j > 2t.\\
\end{align*}
\end{corollary}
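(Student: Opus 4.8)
The plan is to obtain \Cref{gcc:Hecc} as a direct specialization of \Cref{gcc:ecc} to the Hamming setting of \Cref{example:PBE_Hamming_ball}, where $\eseta=\HBall{q}{n}{0}=\{\60\}$ and $\esetb=\HBall{q}{n}{t}$. First I would compute the three relevant difference sets. Since the difference of two Hamming balls of radii $r$ and $r'$ is the Hamming ball of radius $\min\{n, r+r'\}$, we get $\esetaa=\Delta(\{\60\},\{\60\})=\{\60\}$, $\esetab=\esetba=\Delta(\{\60\},\HBall{q}{n}{t})=\HBall{q}{n}{t}$, and $\esetbb=\Delta(\HBall{q}{n}{t},\HBall{q}{n}{t})=\HBall{q}{n}{\min\{n,2t\}}$.

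Next I would translate the three intersection conditions of \Cref{gcc:ecc} into minimum-distance conditions on the inner code $\codeB_j=[n,k_j,d_j]_q$. A linear code $\codeB_j$ satisfies $\codeB_j\cap\HBall{q}{n}{r}=\{\60\}$ if and only if it has no nonzero codeword of Hamming weight at most $r$, i.e.\ iff $d_j>r$. Applying this: $\codeB_j\cap\esetaa=\{\60\}$ holds automatically (as $\esetaa=\{\60\}$ and $\codeB_j$ is a code, hence contains $\60$), so the first branch of \Cref{gcc:ecc} becomes simply $D_j>2w$ with no extra inner-code requirement; $\codeB_j\cap\esetab=\{\60\}$ becomes $d_j>t$, giving the second branch $D_j>w$ and $d_j>t$; and $\codeB_j\cap\esetbb=\{\60\}$ becomes $d_j>2t$ (using $d_j\le n$ so the truncation $\min\{n,2t\}$ is immaterial), giving the third branch $d_j>2t$ with no condition on $D_j$. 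Substituting these into \Cref{gcc:ecc} yields exactly the stated disjunction for every level $j\in[s]$.

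I anticipate no real obstacle here — the statement is a clean corollary, and the only things to be careful about are (i) noting explicitly that $\esetaa=\{\60\}$ so the first condition $\codeB_j\cap\esetaa=\{\60\}$ is vacuous, and (ii) the harmless truncation at $n$ in $\esetbb$, which can be dispatched by the remark that $d_j\le n$ always. So the proof reduces to one sentence invoking \Cref{gcc:ecc} together with the computation of the difference sets above and the characterization $\codeB_j\cap\HBall{q}{n}{r}=\{\60\}\iff d_j>r$.

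\begin{proof}
Apply \Cref{gcc:ecc} with $\eseta=\HBall{q}{n}{0}=\{\60\}$ and $\esetb=\HBall{q}{n}{t}$, as in \Cref{example:PBE_Hamming_ball}. Since the difference set of two Hamming balls of radii $r,r'$ is $\HBall{q}{n}{\min\{n,r+r'\}}$, we have $\esetaa=\{\60\}$, $\esetab=\esetba=\HBall{q}{n}{t}$, and $\esetbb=\HBall{q}{n}{\min\{n,2t\}}$. For a linear code $\codeB_j=[n,k_j,d_j]_q$ one has $\codeB_j\cap\HBall{q}{n}{r}=\{\60\}$ if and only if $\codeB_j$ has no nonzero codeword of weight $\le r$, i.e.\ $d_j>r$. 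In particular $\codeB_j\cap\esetaa=\{\60\}$ holds unconditionally, $\codeB_j\cap\esetab=\{\60\}$ is equivalent to $d_j>t$, and $\codeB_j\cap\esetbb=\{\60\}$ is equivalent to $d_j>2t$ (using $d_j\le n$, so the truncation is irrelevant). Substituting these equivalences into the three alternatives of \Cref{gcc:ecc} gives, for each level $j\in[s]$, that it suffices to have $D_j>2w$, or $D_j>w$ and $d_j>t$, or $d_j>2t$, which is the claim.
\end{proof}
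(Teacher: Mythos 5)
Your proof is correct and is exactly the specialization the paper has in mind (the paper states \Cref{gcc:Hecc} without a written proof, treating it as an immediate instance of \Cref{gcc:ecc}). You correctly compute $\esetaa=\{\60\}$, $\esetab=\esetba=\HBall{q}{n}{t}$, $\esetbb=\HBall{q}{n}{\min\{n,2t\}}$, translate each intersection condition $\codeB_j\cap\HBall{q}{n}{r}=\{\60\}$ into $d_j>r$, and observe that the first intersection condition is vacuous; the remark that $d_j\le n$ disposes of the truncation in $\esetbb$. Nothing is missing.
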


\begin{example}
Applying \Cref{gcc:Hecc}, we observe that the code  $\code(\codeA_{[2]},\codeB_{[2]})$ constructed in \Cref{ex:gcc} is $(1,1)$-PBEC.
\end{example}

Inspired by \Cref{gcc:ecc} and the construction of \cite{wolf1965codes}, we systematically construct $(\eseta, \esetb,w)$-PBECCs via the following choice of inner and outer codes.

\begin{construction}[$2$-level code]\label{const:2lvl}
Select
$\codeB_1 = [n,k_1,\esetaa]_q$, 
$\codeB_2 = [n,k_2,\esetbb]_q$ as inner,
and 
$\codeA_1 = [m, K_1, 2w+1]_{q^{k_1-k_2}}$, 
$\codeA_2 = \F^m_{q^{k_2}}$ as outer codes.
By \Cref{gcc:ecc}, $\code(\codeA_{[2]},\codeB_{[2]})$ is $(\eseta,\esetb,w)$-PBEC.
We select $\codeB_1, \codeB_2$ on the GV bound and, for sufficiently%
\footnote{$|\esetbb|\cdot q^{-o(n)} \geq m |\esetaa|$ is sufficient, and often holds for moderate $n$.} %
large $n$, $\codeA_1$ MDS.
Let $|\esetaa| = q^{\caa\cdot n + o(n)}$, and $|\esetbb| = q^{\cbb\cdot n + o(n)}$.
Then, the overall rate is $\RTwo-o(1)$ with 
\[
\RTwo = 1 - \cbb + (\cbb - \caa)\max\{1-2W,0\}.
\]
\end{construction}

Indeed, Hamming phased bursts recover the code of \cite{wolf1965codes}.

\begin{corollary}[$2$-level H-PBE code]\label{corr:H_2lvl}
Let $\wtt=\T\cdot n$, $\wtw =\W\cdot n$.
For $\eseta=\{\60\}$, $\esetb=\HBall{q}{n}{\wtt}$, \Cref{const:2lvl} yields a $(t,w)$-PBECC of rate $R_\mathrm{2lvl} = 1 - \min\{1,2\W\} \cdot F_q(2\T)$, which is equivalent to the construction in \cite{wolf1965codes}.
\end{corollary}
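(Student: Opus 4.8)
The plan is to instantiate \Cref{const:2lvl} with the Hamming-metric data and simplify the resulting rate expression. First I would record that for $\eseta = \{\60\}$ and $\esetb = \HBall{q}{n}{t}$ with $t = \T n$, the relevant difference sets are $\esetaa = \Delta(\{\60\}) = \{\60\}$ and $\esetbb = \Delta(\HBall{q}{n}{t}) = \HBall{q}{n}{\min(n,2t)}$, so by \Cref{eq:general_approximations} we have $\caa = 0$ and $\cbb = F_q(2\T)$. The inner code $\codeB_1$ must satisfy $\codeB_1 \cap \esetaa = \{\60\}$, which is vacuous since $\esetaa = \{\60\}$; hence $\codeB_1$ can be taken to be all of $\F_q^n$, i.e.\ $k_1 = n$ and $\caa = 0$ is consistent. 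The inner code $\codeB_2$ must correct all errors in $\esetbb$, i.e.\ be a $[n,k_2,2\min(n,t)+1]_q$ code — a $t$-error-correcting code — chosen on the GV bound, giving $k_2/n \to 1 - F_q(2\T)$.

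Next I would plug $\caa = 0$ and $\cbb = F_q(2\T)$ into the rate formula of \Cref{const:2lvl}:
\[
\RTwo = 1 - \cbb + (\cbb - \caa)\max\{1 - 2\W, 0\} = 1 - F_q(2\T)\bigl(1 - \max\{1-2\W,0\}\bigr).
\]
Then I would observe that $1 - \max\{1-2\W, 0\} = \min\{2\W, 1\}$: when $2\W \le 1$ this is $1 - (1-2\W) = 2\W = \min\{2\W,1\}$, and when $2\W > 1$ it is $1 - 0 = 1 = \min\{2\W,1\}$. This yields $\RTwo = 1 - \min\{1, 2\W\}\cdot F_q(2\T)$, matching the claimed expression. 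The achievability (that \Cref{const:2lvl} indeed produces a $(t,w)$-PBECC of this rate) is already guaranteed by \Cref{gcc:ecc}/\Cref{const:2lvl}, so nothing new is needed there beyond checking the footnote condition $|\esetbb|\, q^{-o(n)} \ge m|\esetaa| = Mn$ holds for large $n$, which is immediate since $|\esetaa| = 1$ and $|\esetbb|$ grows exponentially (unless $\T = 0$, a degenerate case).

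The remaining task is the claim of equivalence with the Wolf construction of \cite{wolf1965codes}. Here I would describe Wolf's scheme in the present notation: it takes an inner $t$-error-correcting code of length $n$ and redundancy $\approx n F_q(2\T)$, applied to each of the $m$ columns, together with an outer MDS (e.g.\ Reed--Solomon) code over the symbol alphabet $\F_{q^{k_2}}$ with minimum distance $2w+1$ protecting against up to $w$ corrupted columns — which is exactly $\codeB_2$ together with $\codeA_1$ in \Cref{const:2lvl} once $\codeB_1 = \F_q^n$ collapses the top level. I would then verify the redundancies coincide: the GCC has redundancy $n m - (k_1 K_1 + k_2 K_2) = n(m - K_1) + (n - k_2)K_1$; with $k_1 = n$, $K_1 = m - 2w$, and $K_2 = m$ this is $n\cdot 2w + (n-k_2)(m-2w)$, whose normalized limit is $2\W F_q(2\T)$ for $2\W \le 1$ (and the code is trivial, rate $0$, for $2\W > 1$ where $K_1 < 0$ forces degeneration), agreeing with $1 - \RTwo$.

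I expect the only real subtlety to be pinning down what ``equivalent to the construction in \cite{wolf1965codes}'' should mean precisely — presumably that the two constructions yield codes with identical parameters (and in fact an identical generator-matrix structure once the degenerate top level is removed), rather than literally the same code for every parameter choice. Everything else is routine substitution and the elementary identity $1 - \max\{1-2\W,0\} = \min\{2\W,1\}$; the $\T = 0$ and $2\W > 1$ boundary cases should be flagged but are harmless.
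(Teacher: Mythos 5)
Your core derivation — instantiating \Cref{const:2lvl} with $\caa = 0$, $\cbb = F_q(2\T)$ and simplifying $1 - \cbb + (\cbb-\caa)\max\{1-2\W,0\}$ to $1 - \min\{1,2\W\}F_q(2\T)$ via the identity $1 - \max\{1-2\W,0\} = \min\{2\W,1\}$ — is correct and is exactly what the paper does in the appendix (``Proof of \Cref{corr:HPBE_H,corr:HPBE_GV}'' combined with ``Properties of Construction~\ref{const:2lvl}''). The footnote-condition check ($|\esetbb| q^{-o(n)} \geq m|\esetaa|$) is also handled correctly.

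However, the redundancy calculation you use to argue equivalence with Wolf's construction contains an algebraic error. The dimension of $\codeC(\codeA_{[2]},\codeB_{[2]})$ is $K_1(k_1-k_2) + K_2 k_2$ (this is also what the paper's appendix uses; Property~\ref{gcc:ecc}'s displayed formula $\sum_j k_j K_j$ is a notational shortcut for this once one interprets the levels correctly, and taken literally would exceed $nm$). With $k_1 = n$, $K_1 = m-2w$, $K_2 = m$, the redundancy is
\[
nm - \bigl[(m-2w)(n-k_2) + m k_2\bigr] = 2w(n - k_2),
\]
which normalizes to $2\W F_q(2\T)$. The expression you wrote, $n\cdot 2w + (n-k_2)(m-2w)$, normalizes instead to $2\W + F_q(2\T)(1-2\W)$, which does not equal $2\W F_q(2\T)$ (take $\W = 0.1$, $F_q(2\T)=0.5$: you get $0.6$ versus $0.1$). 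Your chain $nm - (k_1K_1 + k_2K_2) = n(m-K_1) + (n-k_2)K_1$ is also not an identity. The intended conclusion — that the normalized redundancy is $2\W F_q(2\T)$ for $2\W\leq 1$, matching Wolf's tensor-product code (whose parity-check matrix $H_\mathrm{in}\otimes H_\mathrm{out}$ has exactly $(n-k_2)\cdot 2w$ rows) — is correct, but the arithmetic you give to support it is not. The paper itself does not spell out the Wolf comparison beyond the assertion, so your attempt to justify it is welcome, but it needs the corrected dimension bookkeeping above.
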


\begin{example}\label{ex:2lvl_vs_GV}
Let $q=2$, $n = m$, $\T = 0.1$, and $\W = 0.2$.
Then, the GV bound is $R_\mathrm{GV} = 0.81$.
Construction~\ref{const:2lvl} achieves $R_\mathrm{2lvl} = 0.71$ for $\codeB_2$ on the GV bound and $\codeA_1$ MDS.
\end{example}

\Cref{const:2lvl} utilizes only two out of three conditions of \Cref{gcc:ecc}, which allow a direct guarantee on the error-correction capability.
This motivates the following generalization.

\begin{construction}[$3$-level code]\label{const:3lvl}
Select 
$\codeB_1 = [n,k_1,\esetaa]_q$, 
$\codeB_2 = [n,k_2,\esetab]_q$, 
$\codeB_3 = [n,k_3,\esetbb]_q$ as inner codes,  
and 
$\codeA_1 = [m, K_1, 2w+1]_{q^{k_1-k_2}}$, 
$\codeA_2 = [m, K_2,  w+1]_{q^{k_2-k_3}}$, 
$\codeA_3 = \F^m_{q^{k_3}}$ as outer codes.
By \Cref{gcc:ecc}, $\code(\codeA_{[3]},\codeB_{[3]})$ is $(\eseta,\esetb,w$)-PBEC.
For sufficiently\footnote{Here, $|\esetbb| \cdot q^{-o(n)}\geq |\esetab| m$ and $|\esetab| \cdot q^{-o(n)} \geq |\esetaa| m$ is sufficient.} %
large $n$, select the inner codes on the GV bound and the outer codes MDS.
Let $|\esetaa| = q^{\caa\cdot n + o(n)}$, $|\esetab| = q^{\cab\cdot n + o(n)}$, and $|\esetbb| = q^{\cbb\cdot n + o(n)}$.
Then, the total rate is $\RThree -o(1)$ with
\[
R_\mathrm{3lvl} = 
\begin{cases}
1 - W(\cab+\cbb) -\caa(1-2\W),& \text{if } 2\W \leq 1,\\
1 - \cab(1-\W) - \cbb\W, & \text{otherwise.}\\
\end{cases}
\]
\end{construction}

For Hamming PBEs, the following corollary is obtained.
\begin{corollary}[$3$-level H-PBE code]\label{corr:H_3lvl}
For $\wtw = \W n$, $\wtt=\T n$, $\eseta=\{\60\}$, $\esetb=\HBall{q}{n}{t}$, \Cref{const:3lvl} yields a $(t,w)$-PBECC of rate $R_\mathrm{3lvl} = 1 - \W  F_q(2\T) -\min\{\W,1-\W\}F_q(\T)$.
\end{corollary}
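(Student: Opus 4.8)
The plan is to specialize Construction~\ref{const:3lvl} to the Hamming-PBE setting $\eseta = \{\60\}$, $\esetb = \HBall{q}{n}{t}$ and simply compute what the general rate formula $\RThree$ becomes. The key observation is that with these choices, the three relevant difference sets $\esetaa = \Delta(\eseta,\eseta)$, $\esetab = \Delta(\eseta,\esetb)$, $\esetbb = \Delta(\esetb,\esetb)$ become $\{\60\}$, $\HBall{q}{n}{t}$, and $\HBall{q}{n}{2t}$ respectively (using that the difference set of two Hamming balls of radii $r,r'$ is the Hamming ball of radius $r+r'$, truncated at $n$). Hence, via \Cref{eq:general_approximations}, the exponents are $\caa = 0$, $\cab = F_q(\T)$, and $\cbb = F_q(2\T)$.

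Next I would substitute these three values into the piecewise formula for $\RThree$ from Construction~\ref{const:3lvl}. In the case $2\W \le 1$: $\RThree = 1 - \W(\cab + \cbb) - \caa(1-2\W) = 1 - \W(F_q(\T) + F_q(2\T)) - 0 = 1 - \W F_q(2\T) - \W F_q(\T)$. In the case $2\W > 1$: $\RThree = 1 - \cab(1-\W) - \cbb\W = 1 - (1-\W)F_q(\T) - \W F_q(2\T)$. Both cases can be written uniformly as $\RThree = 1 - \W F_q(2\T) - \min\{\W, 1-\W\} F_q(\T)$, since $\min\{\W,1-\W\} = \W$ precisely when $\W \le 1/2$, i.e. $2\W \le 1$, and equals $1-\W$ otherwise. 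This matches the claimed expression.

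Two small points need checking. First, one must verify that the side conditions in the footnote of Construction~\ref{const:3lvl} — namely $|\esetbb| q^{-o(n)} \ge |\esetab| m$ and $|\esetab| q^{-o(n)} \ge |\esetaa| m$ — are satisfied for large $n$, so that the inner codes can be taken on the GV bound and the outer codes MDS; since $|\esetaa| = 1$, $|\esetab| = q^{F_q(\T)n \pm o(n)}$, $|\esetbb| = q^{F_q(2\T)n \pm o(n)}$ and $m = Mn$ grows only polynomially, these hold whenever $F_q(2\T) > F_q(\T) > 0$, and the boundary/degenerate cases ($\T = 0$, or $\T$ large enough that $F_q(2\T) = F_q(\T)$) only collapse the formula to something still correct and achievable trivially. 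Second, one should note that $\eseta = \{\60\}$ and $\esetb = \HBall{q}{n}{t}$ indeed satisfy the hypothesis $\60 \in \eseta$ and $\eseta \subseteq \esetb$ required throughout, and that the Hamming-PBE admissibility (stated in the example before \Cref{corr:HPBE_H}) is exactly what licenses the exponent identities above.

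I do not anticipate a genuine obstacle here — this corollary is a direct substitution into an already-proved general result, so the "hard part" is merely bookkeeping: correctly identifying the three difference-set exponents and checking that the $\min$ in the final formula faithfully encodes the two-case split of $\RThree$. The only thing warranting a sentence of care is the degenerate-parameter edge cases, which I would dispatch by remarking that the stated rate formula remains valid (and achievable) there as well.
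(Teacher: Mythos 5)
Your proposal is correct and follows exactly the route implicit in the paper: take the Hamming-PBE exponents $c_{11}=0$, $c_{12}=F_q(T)$, $c_{22}=F_q(2T)$ (established in the appendix's proof for \Cref{corr:HPBE_H,corr:HPBE_GV}) and substitute them into the two-case rate formula derived in the appendix's discussion of \Cref{const:3lvl}, then note the two branches collapse to the single expression with $\min\{W,1-W\}$. Your side remarks about the footnote conditions and degenerate parameters are a reasonable extra check, though the paper leaves them implicit.
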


\begin{example}\label{ex:3lvl_vs_GV}
For $q=2$, $n = m$, $\T = 0.1$, and $\W = 0.2$, \Cref{const:3lvl} achieves $R_\mathrm{3lvl} = 0.76$ using inner codes on the GV bound and outer codes MDS.
This represents a significant rate improvement compared to the $2$-level construction (\Cref{ex:2lvl_vs_GV}).
\end{example}

\begin{figure}
    \centering
    \begin{subfigure}[b]{0.25\textwidth}
        \centering
        \begin{tikzpicture}[baseline] 
\begin{axis}[%
ymax = 1,
ymin = 0,
xmin = 0.0,
xmax = 1,
xtick={0, 0.25, 0.5, 0.75, 1}, 
ytick={0, 0.25, 0.5, 0.75, 1}, 
label style={font=\footnotesize},
ylabel={rate $R$},
xlabel={$W$},
ymajorgrids,
xmajorgrids,
grid style=dashed,
ylabel near ticks,
xlabel near ticks,
width = 0.95*\linewidth,
height = 0.95*\linewidth,
legend columns=4,
legend entries={classical GV, classical H, GV, H},
legend to name={myboundlegend},
]

\addplot[line width=1.0pt, blue]
  table[row sep=crcr]{
0.000000000000000 1.0 \\
0.0200000000000000 0.96 \\
0.0400000000000000 0.92 \\
0.0600000000000000 0.88 \\
0.0800000000000000 0.84 \\
0.100000000000000 0.8 \\
0.120000000000000 0.76 \\
0.140000000000000 0.72 \\
0.160000000000000 0.6799999999999999 \\
0.180000000000000 0.64 \\
0.200000000000000 0.6 \\
0.220000000000000 0.56 \\
0.240000000000000 0.52 \\
0.260000000000000 0.48 \\
0.280000000000000 0.43999999999999995 \\
0.300000000000000 0.4 \\
0.320000000000000 0.36 \\
0.340000000000000 0.31999999999999995 \\
0.360000000000000 0.28 \\
0.380000000000000 0.24 \\
0.400000000000000 0.19999999999999996 \\
0.420000000000000 0.16000000000000003 \\
0.440000000000000 0.12 \\
0.460000000000000 0.07999999999999996 \\
0.480000000000000 0.040000000000000036 \\
1 0 \\
};
\addlegendentry{\Cref{const:2lvl}}

\addplot[line width=1.02pt, red]
  table[row sep=crcr]{
0.000000000000000 1.0 \\
0.0200000000000000 0.9637744375108174 \\
0.0400000000000000 0.9275488750216346 \\
0.0600000000000000 0.891323312532452 \\
0.0800000000000000 0.8550977500432694 \\
0.100000000000000 0.8188721875540868 \\
0.120000000000000 0.782646625064904 \\
0.140000000000000 0.7464210625757214 \\
0.160000000000000 0.7101955000865388 \\
0.180000000000000 0.6739699375973561 \\
0.200000000000000 0.6377443751081735 \\
0.220000000000000 0.6015188126189908 \\
0.240000000000000 0.5652932501298081 \\
0.260000000000000 0.5290676876406255 \\
0.280000000000000 0.49284212515144277 \\
0.300000000000000 0.45661656266226014 \\
0.320000000000000 0.4203910001730774 \\
0.340000000000000 0.3841654376838947 \\
0.360000000000000 0.3479398751947122 \\
0.380000000000000 0.3117143127055295 \\
0.400000000000000 0.27548875021634683 \\
0.420000000000000 0.23926318772716432 \\
0.440000000000000 0.20303762523798158 \\
0.460000000000000 0.1668120627487989 \\
0.480000000000000 0.13058650025961627 \\
0.500000000000000 0.09436093777043358 \\
0.520000000000000 0.09058650025961623 \\
0.540000000000000 0.08681206274879888 \\
0.560000000000000 0.08303762523798153 \\
0.580000000000000 0.07926318772716423 \\
0.600000000000000 0.07548875021634688 \\
0.620000000000000 0.07171431270552953 \\
0.640000000000000 0.06793987519471217 \\
0.660000000000000 0.06416543768389482 \\
0.680000000000000 0.06039100017307747 \\
0.700000000000000 0.05661656266226017 \\
0.720000000000000 0.05284212515144282 \\
0.740000000000000 0.04906768764062547 \\
0.760000000000000 0.045293250129808116 \\
0.780000000000000 0.041518812618990764 \\
0.800000000000000 0.03774437510817341 \\
0.820000000000000 0.03396993759735609 \\
0.840000000000000 0.030195500086538762 \\
0.860000000000000 0.02642106257572141 \\
0.880000000000000 0.022646625064904058 \\
0.900000000000000 0.018872187554086706 \\
0.920000000000000 0.015097750043269367 \\
0.940000000000000 0.011323312532452043 \\
0.960000000000000 0.007548875021634691 \\
0.980000000000000 0.0037744375108173453 \\
1.00000000000000 0.0 \\
};
\addlegendentry{\Cref{const:3lvl}}

\addplot[line width=1.0pt, dotted]
  table[row sep=crcr]{
0.000000000000000 1.0 \\
0.0200000000000000 0.9675488750216347 \\
0.0400000000000000 0.9350977500432693 \\
0.0600000000000000 0.9026466250649041 \\
0.0800000000000000 0.8701955000865387 \\
0.100000000000000 0.8377443751081735 \\
0.120000000000000 0.8052932501298081 \\
0.140000000000000 0.7728421251514428 \\
0.160000000000000 0.7403910001730775 \\
0.180000000000000 0.7079398751947121 \\
0.200000000000000 0.6754887502163469 \\
0.220000000000000 0.6430376252379815 \\
0.240000000000000 0.6105865002596162 \\
0.260000000000000 0.5781353752812509 \\
0.280000000000000 0.5456842503028856 \\
0.300000000000000 0.5132331253245204 \\
0.320000000000000 0.4807820003461549 \\
0.340000000000000 0.4483308753677896 \\
0.360000000000000 0.4158797503894244 \\
0.380000000000000 0.38342862541105904 \\
0.400000000000000 0.3509775004326937 \\
0.420000000000000 0.3185263754543285 \\
0.440000000000000 0.28607525047596305 \\
0.460000000000000 0.2536241254975977 \\
0.480000000000000 0.2211730005192325 \\
0.500000000000000 0.18872187554086717 \\
0.520000000000000 0.18117300051923246 \\
0.540000000000000 0.17362412549759776 \\
0.560000000000000 0.16607525047596305 \\
0.580000000000000 0.15852637545432846 \\
0.600000000000000 0.15097750043269376 \\
0.620000000000000 0.14342862541105905 \\
0.640000000000000 0.13587975038942435 \\
0.660000000000000 0.12833087536778964 \\
0.680000000000000 0.12078200034615494 \\
0.700000000000000 0.11323312532452035 \\
0.720000000000000 0.10568425030288564 \\
0.740000000000000 0.09813537528125094 \\
0.760000000000000 0.09058650025961623 \\
0.780000000000000 0.08303762523798153 \\
0.800000000000000 0.07548875021634682 \\
0.820000000000000 0.06793987519471223 \\
0.840000000000000 0.060391000173077525 \\
0.860000000000000 0.05284212515144282 \\
0.880000000000000 0.045293250129808116 \\
0.900000000000000 0.03774437510817341 \\
0.920000000000000 0.030195500086538707 \\
0.940000000000000 0.022646625064904113 \\
0.960000000000000 0.015097750043269409 \\
0.980000000000000 0.0075488750216347045 \\
1.00000000000000 0.0 \\
};
\addlegendentry{GV}

\addplot[line width=1.0pt]
  table[row sep=crcr]{
0.000000000000000 1.0 \\
0.0200000000000000 0.9837744375108174 \\
0.0400000000000000 0.9675488750216347 \\
0.0600000000000000 0.9513233125324521 \\
0.0800000000000000 0.9350977500432693 \\
0.100000000000000 0.9188721875540867 \\
0.120000000000000 0.9026466250649041 \\
0.140000000000000 0.8864210625757214 \\
0.160000000000000 0.8701955000865387 \\
0.180000000000000 0.8539699375973561 \\
0.200000000000000 0.8377443751081735 \\
0.220000000000000 0.8215188126189907 \\
0.240000000000000 0.8052932501298081 \\
0.260000000000000 0.7890676876406255 \\
0.280000000000000 0.7728421251514428 \\
0.300000000000000 0.7566165626622602 \\
0.320000000000000 0.7403910001730775 \\
0.340000000000000 0.7241654376838949 \\
0.360000000000000 0.7079398751947121 \\
0.380000000000000 0.6917143127055295 \\
0.400000000000000 0.6754887502163469 \\
0.420000000000000 0.6592631877271642 \\
0.440000000000000 0.6430376252379815 \\
0.460000000000000 0.6268120627487989 \\
0.480000000000000 0.6105865002596162 \\
0.500000000000000 0.5943609377704335 \\
0.520000000000000 0.5781353752812509 \\
0.540000000000000 0.5619098127920683 \\
0.560000000000000 0.5456842503028856 \\
0.580000000000000 0.529458687813703 \\
0.600000000000000 0.5132331253245204 \\
0.620000000000000 0.49700756283533765 \\
0.640000000000000 0.4807820003461549 \\
0.660000000000000 0.4645564378569723 \\
0.680000000000000 0.4483308753677896 \\
0.700000000000000 0.4321053128786071 \\
0.720000000000000 0.4158797503894244 \\
0.740000000000000 0.39965418790024176 \\
0.760000000000000 0.38342862541105904 \\
0.780000000000000 0.3672030629218763 \\
0.800000000000000 0.3509775004326937 \\
0.820000000000000 0.3347519379435111 \\
0.840000000000000 0.3185263754543285 \\
0.860000000000000 0.30230081296514577 \\
0.880000000000000 0.28607525047596305 \\
0.900000000000000 0.26984968798678044 \\
0.920000000000000 0.2536241254975977 \\
0.940000000000000 0.23739856300841522 \\
0.960000000000000 0.2211730005192325 \\
0.980000000000000 0.2049474380300499 \\
1.00000000000000 0.18872187554086717 \\
};
\addlegendentry{H}

\end{axis}
\end{tikzpicture}
        \caption{fixing $T = 0.25$}\label{fig:code_fixT}
    \end{subfigure}%
    \begin{subfigure}[b]{0.25\textwidth}
        \centering
        \begin{tikzpicture}[baseline] 
\begin{axis}[%
ymax = 1,
ymin = 0,
xmin = 0.0,
xmax = 0.5,
xtick={0, 0.25, 0.5, 0.75, 1}, 
ytick={0, 0.25, 0.5, 0.75, 1}, 
label style={font=\footnotesize},
ylabel={rate $R$},
xlabel={$T$},
ymajorgrids,
xmajorgrids,
grid style=dashed,
ylabel near ticks,
xlabel near ticks,
width = 0.95*\linewidth,
height = 0.95*\linewidth,
]

\addplot[line width=1.0pt, blue]
  table[row sep=crcr]{
0.000000000000000 1.00000000000000 \\
0.0200000000000000 0.8546246865505511 \\
0.0400000000000000 0.7586924858786364 \\
0.0600000000000000 0.6823834808275814 \\
0.0800000000000000 0.6194142672156604 \\
0.100000000000000 0.5668431430675825 \\
0.120000000000000 0.5229758323692867 \\
0.140000000000000 0.4867295136639217 \\
0.160000000000000 0.45737112536530367 \\
0.180000000000000 0.4343900864467046 \\
0.200000000000000 0.4174296433271988 \\
0.220000000000000 0.4062474872667666 \\
0.240000000000000 0.4006926784028789 \\
0.260000000000000 0.400000000000000 \\
0.280000000000000 0.400000000000000 \\
0.300000000000000 0.400000000000000 \\
0.320000000000000 0.400000000000000 \\
0.340000000000000 0.400000000000000 \\
0.360000000000000 0.400000000000000 \\
0.380000000000000 0.400000000000000 \\
0.400000000000000 0.400000000000000 \\
0.420000000000000 0.400000000000000 \\
0.440000000000000 0.400000000000000 \\
0.460000000000000 0.400000000000000 \\
0.480000000000000 0.400000000000000 \\
0.500000000000000 0.400000000000000 \\
0.520000000000000 0.400000000000000 \\
0.540000000000000 0.400000000000000 \\
0.560000000000000 0.400000000000000 \\
0.580000000000000 0.400000000000000 \\
0.600000000000000 0.400000000000000 \\
0.620000000000000 0.400000000000000 \\
0.640000000000000 0.400000000000000 \\
0.660000000000000 0.400000000000000 \\
0.680000000000000 0.400000000000000 \\
0.700000000000000 0.400000000000000 \\
0.720000000000000 0.400000000000000 \\
0.740000000000000 0.400000000000000 \\
0.760000000000000 0.400000000000000 \\
0.780000000000000 0.400000000000000 \\
0.800000000000000 0.400000000000000 \\
0.820000000000000 0.400000000000000 \\
0.840000000000000 0.400000000000000 \\
0.860000000000000 0.400000000000000 \\
0.880000000000000 0.400000000000000 \\
0.900000000000000 0.400000000000000 \\
0.920000000000000 0.400000000000000 \\
0.940000000000000 0.400000000000000 \\
0.960000000000000 0.400000000000000 \\
0.980000000000000 0.400000000000000 \\
};

\addplot[line width=1.0pt, red]
  table[row sep=crcr]{
0.000000000000000 1.00000000000000 \\
0.0200000000000000 0.8848801805127293 \\
0.0400000000000000 0.8066585862145937 \\
0.0600000000000000 0.7429582646674479 \\
0.0800000000000000 0.6890533765471484 \\
0.100000000000000 0.642722893457007 \\
0.120000000000000 0.602679656598434 \\
0.140000000000000 0.5680931133391041 \\
0.160000000000000 0.538392696290482 \\
0.180000000000000 0.5131719295048683 \\
0.200000000000000 0.4921363931973907 \\
0.220000000000000 0.4750734927447936 \\
0.240000000000000 0.46183425538608286 \\
0.260000000000000 0.4519760882522146 \\
0.280000000000000 0.4433647568319608 \\
0.300000000000000 0.43561273023079217 \\
0.320000000000000 0.4286855626826518 \\
0.340000000000000 0.4225543885080909 \\
0.360000000000000 0.41719504322335227 \\
0.380000000000000 0.4125873933321101 \\
0.400000000000000 0.40871482166359935 \\
0.420000000000000 0.4055638314899039 \\
0.440000000000000 0.40312374363338327 \\
0.460000000000000 0.4013864683539322 \\
0.480000000000000 0.4003463392014394 \\
0.500000000000000 0.39999999999999997 \\
0.520000000000000 0.400000000000000 \\
0.540000000000000 0.400000000000000 \\
0.560000000000000 0.400000000000000 \\
0.580000000000000 0.400000000000000 \\
0.600000000000000 0.400000000000000 \\
0.620000000000000 0.400000000000000 \\
0.640000000000000 0.400000000000000 \\
0.660000000000000 0.400000000000000 \\
0.680000000000000 0.400000000000000 \\
0.700000000000000 0.400000000000000 \\
0.720000000000000 0.400000000000000 \\
0.740000000000000 0.400000000000000 \\
0.760000000000000 0.400000000000000 \\
0.780000000000000 0.400000000000000 \\
0.800000000000000 0.400000000000000 \\
0.820000000000000 0.400000000000000 \\
0.840000000000000 0.400000000000000 \\
0.860000000000000 0.400000000000000 \\
0.880000000000000 0.400000000000000 \\
0.900000000000000 0.400000000000000 \\
0.920000000000000 0.400000000000000 \\
0.940000000000000 0.400000000000000 \\
0.960000000000000 0.400000000000000 \\
0.980000000000000 0.400000000000000 \\
};

\addplot[line width=1.00pt, dotted]
  table[row sep=crcr]{
0.0 1.00000000000000\\
0.041666666666666664 0.850070624300089\\
0.08333333333333333 0.751709889817820\\
0.125 0.673861334080242\\
0.16666666666666666 0.609986547010988\\
0.20833333333333331 0.557029080313930\\
0.25 0.513233125324520\\
0.29166666666666663 0.477481318458781\\
0.3333333333333333 0.449022499567306\\
0.375 0.427339598245021\\
0.41666666666666663 0.412078746009308\\
0.4583333333333333 0.403009103088418\\
0.5 0.400000000000000\\
0.5416666666666666 0.400000000000000\\
0.5833333333333333 0.400000000000000\\
0.625 0.400000000000000\\
0.6666666666666666 0.400000000000000\\
0.7083333333333333 0.400000000000000\\
0.75 0.400000000000000\\
0.7916666666666666 0.400000000000000\\
0.8333333333333333 0.400000000000000\\
0.875 0.400000000000000\\
0.9166666666666666 0.400000000000000\\
0.9583333333333333 0.400000000000000\\
1.0 0.400000000000000\\
};

\addplot[line width=1.0pt]
  table[row sep=crcr]{
0.0 1.00000000000000\\
0.041666666666666664 0.925035312150044\\
0.08333333333333333 0.875854944908910\\
0.125 0.836930667040121\\
0.16666666666666666 0.804993273505494\\
0.20833333333333331 0.778514540156965\\
0.25 0.756616562662260\\
0.29166666666666663 0.738740659229391\\
0.3333333333333333 0.724511249783653\\
0.375 0.713669799122510\\
0.41666666666666663 0.706039373004654\\
0.4583333333333333 0.701504551544209\\
0.5 0.700000000000000\\
0.5416666666666666 0.700000000000000\\
0.5833333333333333 0.700000000000000\\
0.625 0.700000000000000\\
0.6666666666666666 0.700000000000000\\
0.7083333333333333 0.700000000000000\\
0.75 0.700000000000000\\
0.7916666666666666 0.700000000000000\\
0.8333333333333333 0.700000000000000\\
0.875 0.700000000000000\\
0.9166666666666666 0.700000000000000\\
0.9583333333333333 0.700000000000000\\
1.0 0.700000000000000\\
};

\end{axis}
\end{tikzpicture}
        \caption{fixing $W=0.3$}\label{fig:code_fixW}
    \end{subfigure}
    \vspace{-0.2cm}
    
\pgfplotslegendfromname{myboundlegend}
\caption{Comparison of the rates achieved by \Cref{const:2lvl} and \Cref{const:3lvl} with the GV and Hamming bound for $q=2$.}
\label{fig:code_comparison}
\end{figure}
The achievable rates of Constructions~\ref{const:2lvl} and \ref{const:3lvl} are plotted in \Cref{fig:code_comparison} for further parametrizations of the Hamming PBE channel.
\Cref{const:3lvl} improves over \Cref{const:2lvl} for all shown values of $\W, \T$.
Next, a general formal comparison is provided.

\begin{comparison}\label{comparison:2lvl_3lvl_gv}
The rates of \Cref{const:2lvl} and \Cref{const:3lvl} satisfy
$R_\mathrm{3lvl} = R_\mathrm{2lvl} +\min\{\W,1-\W\}(\cbb-\cab)$. 
Similarly, a comparison with the GV bound (\Cref{thm:GV}) yields
\[
\frac{R_{\mathrm{GV}} - R_\mathrm{3lvl}}{\min\{\W,1-\W\}}= 
\begin{cases}
\cbb - \cab, & \text{if } \caa + \cbb\leq 2\cab,\\
\cab - \caa, & \text{otherwise.}
\end{cases}
\]
That is, the GV bound is achieved for $\caa=\cab$, $\cab=\cbb$. 
\end{comparison}

\Cref{comparison:2lvl_3lvl_gv} shows that the presented code constructions generally do not achieve the GV bound.
On the other hand, efficient decoding of the constructed GCCs is feasible, provided that efficient decoders for the component codes are available.

\section*{Acknowledgment}
The authors thank Hugo Sauerbier Couvée for \Cref{rem:both_relevant}.
Sebastian Bitzer  acknowledges the financial support by the Federal Ministry of Education and
Research of Germany in the program of “Souverän. Digital. Vernetzt.”. Joint project 6G-life, project
identification number: 16KISK002.
Andrea Di Giusto is supported by the European Commission through grant 101072316.

\bibliographystyle{plain}
\bibliography{references.bib}

\begin{thebibliography}{10}

\bibitem{blaum1995evenodd}
Mario Blaum, Jim Brady, Jehoshua Bruck, and Jai Menon.
\newblock {EVENODD: An Efficient Scheme for Tolerating Double Disk Failures in
  RAID Architectures}.
\newblock {\em IEEE Transactions on Computers}, 44(2):192--202, 1995.

\bibitem{blaum1993new}
Mario Blaum and Ron~M Roth.
\newblock {New Array Codes for Multiple Phased Burst Correction}.
\newblock {\em IEEE Transactions on Information Theory}, 39(1):66--77, 1993.

\bibitem{da1993secret}
VC~da~Rocha and Mario Blaum.
\newblock {A Secret-Key Cryptosystem Based on Phased Burst Correcting Codes}.
\newblock In {\em International Symposium on Communication Theory and
  Applications}, pages 136--142, 1993.

\bibitem{dolecek2017channel}
Lara Dolecek and Yuval Cassuto.
\newblock {Channel Coding for Nonvolatile Memory Technologies: Theoretical
  Advances and Practical Considerations}.
\newblock {\em Proceedings of the IEEE}, 105(9):1705--1724, 2017.

\bibitem{dumer1998concatenated}
Ilya~I Dumer.
\newblock {Concatenated Codes and their Multilevel Generalizations}.
\newblock {\em Handbook of Coding Theory}, 1998.

\bibitem{gabrys2012graded}
Ryan Gabrys, Eitan Yaakobi, and Lara Dolecek.
\newblock {Graded Bit-Error-Correcting Codes with Applications to Flash
  Memory}.
\newblock {\em IEEE Transactions on Information Theory}, 59(4):2315--2327,
  2012.

\bibitem{goodman1993phased}
Rodney~M Goodman, Robert~J McEliece, and Masahiro Sayano.
\newblock {Phased Burst Error-Correcting Array Codes}.
\newblock {\em IEEE Transactions on Information Theory}, 39(2):684--693, 1993.

\bibitem{li2021balanced}
Juane Li, Yi~Gong, Shu Lin, and Khaled Abdel-Ghaffar.
\newblock {Balanced Incomplete Block Designs, Partial Geometries, and their
  Associated QC-LDPC Codes}.
\newblock In {\em International Symposium on Topics in Coding (ISTC)}, pages
  1--5. IEEE, 2021.

\bibitem{loeliger1994basic}
Hans-Andrea Loeliger.
\newblock {On the Basic Averaging Arguments for Linear Codes}.
\newblock {\em Communications and Cryptography: Two Sides of One Tapestry},
  pages 251--261, 1994.

\bibitem{manganiello2023generic}
Felice Manganiello and Freeman Slaughter.
\newblock {Generic Error SDP and Generic Error CVE}.
\newblock In {\em Code-Based Cryptography Workshop}, pages 125--143. Springer,
  2023.

\bibitem{raphaeli2005burst}
Dan Raphaeli.
\newblock {The Burst Error Correcting Capabilities of a Simple Array Code}.
\newblock {\em IEEE Transactions on Information Theory}, 51(2):722--728, 2005.

\bibitem{ravagnani2018adversarial}
Alberto Ravagnani and Frank~R Kschischang.
\newblock {Adversarial Network Coding}.
\newblock {\em IEEE Transactions on Information Theory}, 65(1):198--219, 2018.

\bibitem{roth1998reduced}
Ron~M Roth and Gadiel Seroussi.
\newblock {Reduced-Redundancy Product Codes for Burst Error Correction}.
\newblock {\em IEEE Transactions on Information Theory}, 44(4):1395--1406,
  1998.

\bibitem{roth2014coding}
Ron~M Roth and Pascal~O Vontobel.
\newblock {Coding for Combined Block-Symbol Error Correction}.
\newblock {\em IEEE Transactions on Information Theory}, 60(5):2697--2713,
  2014.

\bibitem{shannon1956zero}
Claude Shannon.
\newblock {The Zero Error Capacity of a Noisy Channel}.
\newblock {\em IRE Transactions on Information Theory}, 2(3):8--19, 1956.

\bibitem{wolf1965codes}
Jack Wolf.
\newblock {On Codes Derivable from the Tensor Product of Check Matrices}.
\newblock {\em IEEE Transactions on Information Theory}, 11(2):281--284, 1965.

\bibitem{xiao2019quasi}
Xin Xiao, Bane Vasi{\'c}, Shu Lin, Khaled Abdel-Ghaffar, and William~E Ryan.
\newblock {Quasi-Cyclic LDPC Codes for Correcting Multiple Phased Bursts of
  Erasures}.
\newblock In {\em International Symposium on Information Theory}, pages 71--75.
  IEEE, 2019.

\bibitem{zinov1976generalized}
Victor~Alexandrovich Zinov'ev.
\newblock {Generalized Cascade Codes}.
\newblock {\em Problemy Peredachi Informatsii}, 12(1):5--15, 1976.

\bibitem{zinov1981generalized}
Victor~Alexandrovich Zinov'ev.
\newblock {Generalized Concatenated Codes for Channels with Error Bursts and
  Independent Errors}.
\newblock {\em Problemy Peredachi Informatsii}, 17(4):53--62, 1981.
\newblock In Russian.

\bibitem{zinov1979correction}
Victor~Alexandrovich Zinov'ev and Victor~Vasilievich Zyablov.
\newblock {Correction of Error Bursts and Independent Errors Using Generalized
  Concatenated Codes}.
\newblock {\em Problemy Peredachi Informatsii}, 15(2):58--70, 1979.
\newblock In Russian.

\end{thebibliography}

\appendix
\subsection{Proofs Related to the Proposed Bounds}

\paragraph{More admissible sequences of error sets and channels}

Throughout this paper, the behavior of our general bounds and constructions is illustrated via Hamming PBEs.
In the following, we demonstrate that plenty of other error sets that appear naturally are admissible according to \Cref{def:admissible}, expanding upon \Cref{rem:both_relevant}.

\textbf{Max-norm error sets.}
Let $\eseta(n) = \{-a,-a+1,\ldots,-a\}^n$ and $\esetb(n) =\{-b,-b+1,\ldots,b\}^n$ with fixed $a\leq b \leq \tfrac{q-1}{2}$.
The coefficients for the sizes of $\eseta$ and $\esetb$ are computed as $\ca = \log_q(2a+1)$ and $\cb = \log_q(2b+1)$.
The corresponding difference sets are $\esetaa = \{-2a,\ldots, 2a\}^n$, $\esetab = \esetba = \{-a-b,\ldots,a+b\}^n$ and $\esetbb = \{-2b,\ldots,2b\}^n$
with $\caa = \log_q(\max\{q,4a+1\})$, $\cab = \log_q(\max\{q,2a+2b+1\})$ and $\caa = \log_q(\max\{q,4b+1\})$.

More generally, any error set sequences of the form $\eseta(n) = (\tilde{\eseta})^n$, $\esetb(n) = (\tilde{\esetb})^n$ with fixed $0\in\tilde{\eseta} \subseteq \tilde{\esetb}\subseteq \F_q$ are admissible according to \Cref{def:admissible}.

\textbf{Linear subspaces.}
For $s = S\cdot n$, $t = T\cdot n$ and $S\leq T$, let $\6e_1, \ldots,\6e_s,\ldots\6e_t \in \F_q^n$ be linearly independent.
We set $\eseta(n) = \langle\6e_1,\ldots\6e_s\rangle$ and $\esetb(n) =\langle\6e_1,\ldots,\6e_s,\ldots,\6e_t\rangle$.
The coefficients for the sizes of $\eseta$ and $\esetb$ are computed as $\ca = S$ and $\cb = T$.
The corresponding difference sets are $\esetaa = \eseta$, $\esetab = \esetba =\esetbb = \esetb$.
Hence, the cardinality coefficients are $\caa = S$ and $\cab = \cba = \cbb = T$.

\paragraph{Proof of \Cref{corr:HPBE_H,corr:HPBE_GV}}

Let $\eseta = \{\60\}$ and $\esetb = \HBall{q}{n}{t}$ with $t=\T\cdot n$.
Then, according to \Cref{eq:general_approximations}, $\ca = 0$ and $\cb = F_q(T)$.
$\eseta = \{\60\}$ implies $\Delta(\eseta,\eseta) = \{0\}$ and $\Delta(\eseta,\esetb) = \Delta(\esetb,\eseta) = \esetb$. 
Further, $\Delta(\esetb,\esetb) = \HBall{q}{n}{\min\{2\wtt,n\}}$ because any vector of weight at most $2\wtt$ can be represented as the sum of two vectors of weight at most $t$.
Hence, $\caa = 0$, $\cab = F_q(T)$, and $\cbb = F_q(2T)$.
Then, \Cref{corr:HPBE_H,corr:HPBE_GV} follow from \Cref{thm:PBE_Hamm,thm:GV}.

\paragraph{Proof of \Cref{comparison:PBE_vs_Hamm}}

Define $\eset_\mathrm{H} \coloneqq \{\6E\in \F_q^{n\times m}:\wtH(\6E) \leq w\cdot t\}$
Then, $\eset_\mathrm{H} \supset \eset(n,m,\HBall{q}{n}{0},\HBall{q}{n}{\wtt},w)$, which implies the statement on the corresponding Hamming bounds.
The comparison of the GV bounds follows from $\Delta(\eset_\mathrm{H}) \supset \deset$. 

\subsection{Proofs Related to the Code Construction}

\paragraph{Properties of Construction~\ref{const:2lvl}}

For the given choice of codes, $\codeC(\codeA_{[2]}, \codeB_{[2]})$ as given in \Cref{const:2lvl} is $(\eseta,\esetb,w)$-PBEC according to \Cref{gcc:ecc}.
For $|\esetaa| = q^{\caa\cdot n +o(n)}$, $|\esetbb| = q^{\cbb\cdot n +o(n)}$, the inner codes $\codeB_1,\codeB_2$ on the GV bound have dimensions
\begin{align*}
    k_1 = n - n\cdot \caa - o(n),\\
    k_2 = n - n\cdot \cbb - o(n).
\end{align*}
The code $\codeA_1$ can be chosen MDS for $m \leq q^{k_1-k_2}$.
Due to
\[
q^{k_1-k_2} \geq q^{n\cdot\cbb} q^{-n\cdot\caa- o(n)} = |\esetbb| q^{o(n)} |\esetaa|^{-1},
\]
$|\esetbb|\cdot q^{-o(n)} \geq |\esetaa|\cdot m$ is sufficient.
Then, $\codeA_1$ has dimension $K_1 = \max\{m-2\wtw,0\}$.
The dimension of $\codeC(\codeA_{[2]}, \codeB_{[2]})$ is computed as $\max\{m-2w,0\}(k_1-k_2) + m\cdot k_2$.
Writing $\W = \wtw/n$, we obtain
\[
n m \cdot \left(\max\{1-2\W,0\} (\cbb-\caa) + (1-\cbb)\right)- m\cdot o(n), 
\]
and the statement on the rate follows by dividing by $n\cdot m$.

\paragraph{Properties of Construction~\ref{const:3lvl}}

For the given choice of codes, $\codeC(\codeA_{[3]}, \codeB_{[3]})$ as given in \Cref{const:3lvl} is $(\eseta,\esetb,w)$-PBEC according to \Cref{gcc:ecc}.
For $|\esetaa| = q^{\caa\cdot n +o(n)}$, $|\esetab| = q^{\cab\cdot n +o(n)}$, $|\esetbb| = q^{\cbb\cdot n +o(n)}$, the inner codes $\codeB_1,\codeB_2,\codeB_3$ on the GV bound have dimensions
\begin{align*}
    k_1 = n - n\cdot \caa - o(n),\\
    k_2 = n - n\cdot \cab - o(n),\\
    k_3 = n - n\cdot \cbb - o(n).\\
\end{align*}
The code $\codeA_1$ can be chosen MDS for $m \leq q^{k_1-k_2}$.
Due to
\[
q^{k_1-k_2} \geq q^{n\cdot\cab} q^{-n\cdot\caa- o(n)} = |\esetab| q^{o(n)} |\esetaa|^{-1},
\]
$|\esetab|\cdot q^{-o(n)} \geq |\esetaa|\cdot m$ is sufficient.
Similarly,  $\codeA_2$ can be MDS for $|\esetbb|\cdot q^{-o(n)} \geq |\esetab|\cdot m$.
Then, $\codeA_1$ and $\codeA_2$ have dimensions $K_1 = \max\{m-2\wtw,0\}$, $K_2 = m-\wtw$.
The dimension of $\codeC(\codeA_{[2]}, \codeB_{[2]})$ is computed as 
\[
\max\{m-2w,0\}(k_1-k_2) + (m-\wtw)(k_2-k_3) + m\cdot k_3.
\]
Writing $\W = \wtw/n$, we obtain
\begin{multline*}
n m \cdot (\max\{1-2\W,0\} (\cab-\caa) \\
+(1-\W)(\cbb-\caa) + (1-\cbb))- m\cdot o(n), 
\end{multline*}
and the statement on the rate follows by dividing by $n\cdot m$.
\paragraph{Proof of \Cref{comparison:2lvl_3lvl_gv}} The proof follows by considering each case individually.
\begin{align*}
    \caa\cbb&\geq\cab^2,& \!\!\!2\W&\leq 1: & \!\!\!\RGV -\RThree  &=  (\cab-\caa)\W ,\\
    \caa\cbb&\geq\cab^2,& \!\!\!2\W&\geq 1: & \!\!\!\RGV -\RThree  &=  (\cab-\caa) (1{-}\W),\\
  \!\!  \caa\cbb&\leq\cab^2,& \!\!\!2\W&\leq 1: & \!\!\!\RGV -\RThree  &=  (\cbb-\cab)\W ,\\
    \caa\cbb&\leq\cab^2,& \!\!\!2\W&\geq 1: & \!\!\!\RGV -\RThree  &=  (\cbb-\cab) (1{-}\W).
\end{align*}

\end{document}